\newcommand{\bblu}{\begin{color}{blue}}
\newcommand{\bred}{\begin{color}{red}}
\newcommand{\ecl}{\end{color}}
\numberwithin{equation}{section}
\newtheorem{proposition}{Proposition}[section]
\newtheorem{lemma}{Lemma}[section]
\newtheorem{theorem}{Theorem}[section]
\def\t#1{\widetilde{#1}}
\def\b#1{\overline{#1}}
\begin{document}

\title{Algebro-geometric integration to the discrete Chen-Lee-Liu system}
\author{Xiaoxue Xu$^{1}$, Decong Yi$^{1}$, Xing Li$^{2}$, Da-jun Zhang$^{3,4}$\footnote{
Corresponding author. E-mail:  djzhang@staff.shu.edu.cn}
\\
{\small $^{1}$School of Mathematics and Statistics, Zhengzhou University, Zhengzhou, Henan 450001, P.R. China}\\
{\small $^2$School of Mathematics and Statistics, Jiangsu Normal University, Xuzhou 221116, China}\\
{\small $^3$Department  of Mathematics, Shanghai University, Shanghai 200444, China}\\
{\small $^{4}$Newtouch Center for Mathematics of Shanghai University,  Shanghai 200444, China}
}

\date{\today}

\maketitle

\begin{abstract}
Algebro-geometric solutions for the discrete Chen-Lee-Liu (CLL) system are derived in this paper.
We construct a nonlinear integrable symplectic map which is used to define discrete phase flows.
Compatibility of the maps with different parameters
gives rise to the discrete CLL system  whose solutions (discrete potentials)
can be formulated through the  discrete phase flows.
Baker-Akhiezer functions are introduced and their asymptotic behaviors are analyzed.
Consequently, we are able to reconstruct the discrete potentials in terms of the Riemann theta functions.
These results can be extended to 3-dimensional case and
algebro-geometric solutions of the discrete modified Kadomtsev-Petviashvili equation are obtained.
Some solutions of genus one case are illustrated.
\\[2mm]
Keywords: discrete Chen-Lee-Liu system, discrete  modified Kadomtsev-Petviashvili equation,
algebro-geometric solution, integrable symplectic map, Riemann theta function
\\[1mm]
MSC numbers: 35Q51, 37K60, 39A36
\end{abstract}

%

\section{Introduction }\label{sec-1}

It is well-known that many integrable equations admit quasi-periodic solutions,
also known as finite-gap, finite-genus, finite-band or algebro-geometric solutions,
which can be derived by using ``finite-gap integration'' technique developed in 1970s
by Novikov, Matveev and their collaborators
\cite{Dubrovin-1975,DN-1975,IM-1975a,IM-1975b,Krichever-1977, Novikov-1974},
(also see \cite{BBEIM-1994} and \cite{Matveev} and the references therein).
An alternative way to get these solutions is based on the idea of nonlinearizing Lax pairs \cite{Cao1990,Cao-G-1990},
leading to variable-separated finite-dimensional Hamiltonian systems,
which allows to derive finite-gap solutions either numerically \cite{RCW-1995}
or exactly \cite{Cao-WG-1999,Cao-WG-1999b,Zhou-1997}.
This approach does not rely on the analysis of the spectrum associated with periodic potentials
and has been applied to a lot of continuous as well as differential-difference integrable systems,
e.g. \cite{GDZ-2007,GWH-2013,GZD-2014}.
In the differential-difference case, the evolution with respect to the discrete independent variable
is formulated by the iteration of a symplectic map \cite{Cao-WG-1999b}.
Such an idea has been extended to fully discrete integrable systems
by Cao, Xu and their collaborators
\cite{CaoX-JPA-2012,CXZ-2020,CZ2012a,CZ2012b,XCN-2021,XCZ-2022,XCZ-JNMP-2020,XJN-2021}.
In practice, however, the fully discrete case is much more complicated.
For some fully discrete soliton equations, a continuous dummy independent  variable is still needed
to recover the discrete potentials, e.g.\cite{CaoX-JPA-2012,CXZ-2020,XJN-2021},
while for some other equations auxiliary independent variables are not necessary,
e.g.\cite{CZ2012a,XCZ-JNMP-2020,XCN-2021,CZ2012b,XJN-2021,XCZ-2022}.
In both cases, the key step (and also the most important step) is to construct
symplectic maps which are associated with the Lax pairs
of the investigated discrete integrable systems.

The purpose of this paper is to derive quasi-periodic solutions for
the discrete Chen-Lee-Liu (CLL) system \cite{DJM-1983}
\begin{subequations}\label{eq:1.1}
\begin{align}
&\Xi_1 \doteq \beta _{2}^{2}\b{a}-\beta _{1}^{2}\t{a}
+a(\beta _{1}^{2}-\beta _{2}^{2})+a\b{\t{b}}(\overline{a}-\t{a})=0,\label{eq:1.1a} \\
&\Xi_2 \doteq \beta _{1}^{2}\b{b}-\beta _{2}^{2}\t{b}
+\overline{\t{b}}(\beta _{2}^{2}-\beta _{1}^{2})+a\b{\widetilde{b}}(\b{b}-\t{b})=0.\label{eq:1.1b}
\end{align}
\end{subequations}
Here $a=a(m_1,m_2)$ and $b=b(m_1,m_2)$ are functions of $(m_1,m_2)$ defined on $\mathbb{Z}^2$;
$\beta _{1}$ and $\beta_2$  are lattice parameters of $m_1$- and $m_2$-direction, respectively;
tilde and hat serve as shift notations, e.g.
\[\t{a}=a(m_{1}+1,m_{2}),~~ \b{a}=a(m_{1},m_{2}+1),
~~ \t{\b{a}}=a(m_{1}+1,m_{2}+1).\]
Integrability of this system has been described in \cite{DJM-1983} with a Lax pair
\begin{subequations}\label{Lax}
\begin{align}\label{Lax-1}
& \t\Phi=
\left(\begin{array}{cc}
-\eta^2+\beta_1^2+ a \t b &      a\\
  \eta^2 \t b &  \beta_1^2
\end{array}\right)
\Phi,\\
& \b\Phi=
\left(\begin{array}{cc}
-\eta^2+\beta_2^2 + a \b b&      a\\
  \eta^2 \b b &  \beta_2^2
\end{array}\right)
\Phi,
\end{align}
\end{subequations}
where $\Phi=(\phi_1, \phi_2)^T$ and $\eta$ serves as the spectral parameter.
In the continuum limits, the system \eqref{eq:1.1} gives rise to the unreduced continuous
derivative nonlinear Schr\"odinger (NLS) system of the CLL type:
\begin{subequations}\label{eq:0.000}
\begin{align}
&a_{y}+a_{xx}+2aba_{x}=0,\label{eq:0.000a} \\
&b_{y}-b_{xx}+2abb_{x}=0,\label{eq:0.000b}
\end{align}
\end{subequations}
where $x$ and $y$ are Miwa variables defined as
\begin{equation}\label{Miwa}
x=\frac{m_1}{\beta_1^2}+\frac{m_2}{\beta_2^2},~~
y=\frac{m_1}{2\beta_1^4}+\frac{m_2}{2\beta_2^4},
\end{equation}
and the related spectral problem is \cite{DJM-1983}
\begin{equation}\label{c-CLL-sp-app}
\Phi_x=
\left(\begin{array}{cc} -\eta^2+a b  & \eta a \\
 \eta  b & 0 \end{array}\right)\Phi.
\end{equation}
By reduction $b=a^*$, the system \eqref{eq:0.000} yields the CLL derivative NLS equation  \cite{CLL}:
\begin{equation}\label{CLL}
\mathrm{i} a_{y}+a_{xx}+2\mathrm{i} |a|^2a_{x}=0,
\end{equation}
where $|a|^2=a a^*$, $*$ stands for complex conjugate, $\mathrm{i}^2=-1$,
and we have replace $y\to- \mathrm{i} y, x\to- \mathrm{i} x$.
Note that the unreduced CLL system \eqref{eq:0.000} and the spectral problem \eqref{c-CLL-sp-app}
(up to some gauge transformation) are the consequence of the constraint of the squared eigenfunction symmetry
of the modified Kadomtsev-Petviashvili (mKP) system \cite{Chen-2002},
while the discrete CLL system \eqref{eq:1.1} and the spectral problem \eqref{Lax-1} (up to some gauge transformation)
are the discrete analogues that are related to the differential-difference mKP system, see \cite{LZZ}.
In addition, \eqref{Lax-1} acts as a Darboux transformation of \eqref{c-CLL-sp-app}
to transform $(\Phi, a,b)$ to $(\t \Phi, a, \t b)$ \cite{LZZ}.

In a recent paper \cite{XCZ-2022} we have derived quasi-periodic solutions for the discrete derivative NLS system
of the Kaup-Newell type:
\begin{subequations}\label{KN}
\begin{align}
&\frac{1}{2}(1+\sqrt{1-4u\t{v}}\,)(\t{u}+\beta_1^2u)
-\frac{1}{2}(1+\sqrt{1-4u\b{v}}\,) (\b{u}+\beta_2^2u)-(\t{u}\t{v}-\b{u}\,\b{v})u=0, \\
&\frac{1}{2}(1+\sqrt{1-4\t{u}\b{\t{v}}}\,)(\t{v}+\beta_2^2\b{\t{v}})
-\frac{1}{2}(1+\sqrt{1-4\b{u}\b{\t{v}}}\,)(\b{v}+\beta_1^2\b{\t{v}})-(\t{u}\t{v}-\b{u}\,\b{v})\b{\t{v}}=0,
\end{align}
\end{subequations}
whose continuous counterpart in terms of the Miwa variable \eqref{Miwa} reads (cf.\eqref{eq:0.000})
\begin{equation}
  u_y+u_{xx}+2(u^2v)_x=0,~\quad   v_y-v_{xx}+2(uv^2)_x=0.
\end{equation}
In addition to \eqref{KN}, the discrete mKP equation \cite{NCW-LNP-1985,Nijhoff},
\begin{equation}\label{eq:A-1-1}
\beta_1^2(e^{-\overline{\widetilde{W}}+\overline{W}}-e^{-\widehat{\widetilde{W}}+\widehat{W}})
+\beta_2^2(e^{-\widehat{\overline{W}}+\widehat{W}}-e^{-\widetilde{\overline{W}}+\widetilde{W}})
+\beta_3^2(e^{-\widetilde{\widehat{W}}+\widetilde{W}}-e^{-\overline{\widehat{W}}+\overline{W}})=0,
\end{equation}
was solved in \cite{XCZ-2022} by using the connections between \eqref{KN} and \eqref{eq:A-1-1}.
In this paper, we will see that the key integrable symplectic map constructed for solving \eqref{eq:1.1}
is definitely different from the one for solving \eqref{KN} used in \cite{XCZ-2022},
while we can also provide solutions for the discrete mKP equation \eqref{eq:A-1-1},
see Appendix \ref{A-1}.

The paper is organized as follows.
In Section \ref{sec-2}, we construct a nonlinear integrable symplectic map with the help of $r$-matrix.
In Section \ref{sec-3},  discrete flows are defined  by iterating the  symplectic map.
Resorting to the theory of Riemann surface,
Baker-Akhiezer functions are introduced and analyzed,
and the discrete potentials in terms of Riemann theta functions are obtained.
Then in Section \ref{sec-4}, we show that the obtained discrete potentials satisfy the coupled discrete CLL
system \eqref{eq:1.1} from the compatibility of the discrete phase flows.
Explicit formulae of algebro-geometric solutions are presented.
In the final section, we summarise our findings and mention some related problems.
There is an Appendix section, where solutions of the discrete mKP equation are derived.
Some obtained solutions of the discrete CLL system and the discrete mKP equation are illustrated.

\section{ Construction of the nonlinear integrable symplectic map}\label{sec-2}

To construct a nonlinear integrable symplectic map, we consider the following matrix
(Lax matrix)
\begin{equation}\label{eq:3.2}
L(\lambda;p,q)=\left(\begin{array}{cc}
L^{11}(\lambda) & L^{12}(\lambda) \\
L^{21}(\lambda) &-L^{11}(\lambda)
\end{array}\right)
=\left(\begin{array}{cc}
\frac{1+<p,q>}{2}+Q_\lambda(A^2p,q)&-\lambda Q_\lambda(Ap,p)\\
\lambda Q_\lambda(Aq,q) &-\frac{1+<p,q>}{2}-Q_\lambda(A^2p,q)
\end{array}\right),
\end{equation}
where $p=(p_{1},\ldots,p_{N})^{T}, q=(q_{1},\ldots,q_{N})^{T}$,
$A\doteq{\rm diag}(\alpha_1,\cdots,\alpha_N)$ subject to
$\alpha_1^2,\cdots,\alpha_N^2$ being distinct and non-zero constants,
and $Q_\lambda(\xi,\eta)=<(\lambda^2-A^2)^{-1}\xi,\eta>$ with $<\cdot,\cdot>$
being the inner product in $\mathbb{R}^{N}$.
For further analysis, we consider the determinant
\begin{equation}\label{eq:3.3}
    F(\lambda)=\mathrm{det}L(\lambda)=-\frac{(1+<p,q>)^2}{4} -(1+<p,q>)Q_{\lambda}(A^{2}p,q)
-Q_{\lambda}^2(A^2p,q)+\lambda^2Q(Ap,p)Q(Aq,q),
\end{equation}
where $L(\lambda)$ denotes $L(\lambda;p,q)$ for short.
For $\left|\lambda\right|>\mathrm{max}\left\{\left| \alpha_1 \right|,\cdots,\left| \alpha_N \right|\right\}$,
we get a power series expansion
\begin{subequations}\label{eq:3.4}
\begin{align}
    F(\lambda)&=\sum_{j=0}^{\infty}{F_k\zeta^{-k}},~~~ \zeta=\lambda^2,\\
    F_0&=-\frac{(1+<p,q>)^2}{4},\\
     F_k&=-(1+<p,q>)<A^{2k}p,q>-\sum_{i+j=k-2;i,j\geq0}<A^{2i+2}p,q><A^{2j+2}p,q> \nonumber \\
      &~~~~ +\sum_{i+j=k-1;i,j\geq 0}<A^{2i+1}p,p><A^{2j+1}q,q>,~~\ \ (k\geq1).
\end{align}
\end{subequations}
In the next, we prove that $F_0, F_1,\cdots, F_{N-1}$ are involutive by employing a $r$-matrix structure.
First, by some calculations, we obtain the $r$-matrices corresponding to the  matrix $L(\lambda)$,
which satisfy\footnote{
For $C=(C^{ij})_{2\times 2}$ and $D=(D^{ij})_{2\times 2}$ which are functions of
$p$ and $q$, the related  Poisson brackets are defined as the following:
\[\{C^{ij},D^{kl}\}=\sum^{N}_{s=1}\biggl(\frac{\partial C^{ij}}{\partial q_s}\frac{\partial D^{kl}}{\partial p_s}-
\frac{\partial C^{ij}}{\partial p_s}\frac{\partial D^{kl}}{\partial q_s}\biggr),
\]
\[ \{C^{ij},D\}=\left(\begin{array}{cc}
\{C^{ij}, D^{11}\}   & \{C^{ij}, D^{12}\} \\
\{C^{ij}, D^{21}\} & \{C^{ij}, D^{22}\}
\end{array}\right),\]
\[ \{C \otimes, D\}=\left(\begin{array}{cc}
\{C^{11}, D\}   & \{C^{12}, D \} \\
\{C^{21}, D \} & \{C^{22}, D \}
\end{array}\right).\]
}
\begin{subequations}
\begin{equation}\label{eq:3.7}
    \left\{L(\lambda)\otimes, L(\mu)\right\}
    =[r(\lambda,\mu),L(\lambda)\otimes I]+[r^{\prime}(\lambda,\mu),I\otimes L(\mu)],
\end{equation}
where
\begin{align}
&r(\lambda,\mu)=\frac{2\lambda}{\lambda^2-\mu^2}P_{\lambda\mu}-\frac{1}{2}(\sigma_3\otimes\sigma_3+I), 
\label{eq:3.7a}\\
&r^{\prime}(\lambda,\mu)
=\frac{2\mu}{\lambda^2-\mu^2}P_{\mu\lambda}+\frac{1}{2}(\sigma_3\otimes\sigma_3+I),\\
&P_{\lambda\mu}=\left( \begin{array}{cccc}
	\lambda&		0&		0&		0\\
	0&		0&		\mu&		0\\
	0&		\mu&		0&		0\\
	0&		0&		0&		\lambda\\
\end{array} \right) , ~\sigma_3=\left(\begin{array}{cc}
    1 &  0\\
   0  & -1
\end{array}\right),
\end{align}
\end{subequations}
$[A,B]=AB-BA$ and  $I$ generally stands for a unit matrix.
Note that we do not fix the size of $I$ without making confusion 
(e.g. $I$ is of $2\times 2$ in \eqref{eq:3.7} while $4\times 4$ in \eqref{eq:3.7a}).
Since $\mathrm{tr}(L(\lambda))=0$, we have  $L^{2}(\lambda)=-F(\lambda)I$ and $L^{2}(\mu)=-F(\mu)I$,
which lead us to
\begin{equation}
\{F(\lambda),F(\mu)\}=\displaystyle\frac{1}{4}\mathrm{tr}\{L^{2}(\lambda) \otimes, L^{2}(\mu)\}.
\end{equation}
In addition, similar to equation (3.22) in \cite{XCN-2021},
using \eqref{eq:3.7} we can express
 $\{L^{2}(\lambda) \otimes, L^{2}(\mu)\}$
 by a summation of two commutators.
It then follows that $\{F(\lambda),F(\mu)\}=0$.
Thus, from the expansion \eqref{eq:3.4} we know that
\begin{equation}
\label{FF}
\{F_{j},F_{k}\}=0,~~~~ ( j,k=0,1,2,\cdots).
\end{equation}

In the next step, we come to  prove that $F_0,\cdots,F_{N-1}$ are functionally independent.
To this end, we consider the factorization
\begin{equation}\label{eq:3.9}
    F(\lambda)=-\frac{(1+<p,q>)^2\prod_{j=1}^N(\zeta-\lambda_j^2)}{4\prod_{j=1}^N(\zeta-\alpha_j^2)}
    =-\frac{(1+<p,q>)^2R(\zeta)}{4\alpha^2(\zeta)},
\end{equation}
where an algebraic curve $(\zeta=\lambda^2)$ is involved:
\begin{equation}\label{eq:3.10}
\mathcal R:~~ \xi^2=R(\zeta)=\prod_{j=1}^N(\zeta-\alpha_j^2)(\zeta-\lambda_j^2).
\end{equation}
Note that here $\alpha^2(\zeta)=\prod_{j=1}^N(\zeta-\alpha_j^2)$.
Since $\mathrm{deg}R=2N$, the above curve $\mathcal R$ has two infinities $\infty_{+}$, $\infty_{-}$
and its genus is $g = N-1$. We denote the generic point on $\mathcal R$ as
\begin{equation}\label{point-R}
\mathfrak p(\zeta)=\big(\zeta,\xi=\sqrt{R(\zeta)}\big),\quad
(\tau\mathfrak p)(\zeta)=\big(\zeta,\xi=-\sqrt{R(\zeta)}\big),
\end{equation}
together with $\tau:\mathcal R\rightarrow\mathcal R$ being the hyperelliptic involution.
The associated basis of holomorphic differentials are
\begin{equation}\label{eq:3.11}
        \omega^{\prime}_s=\frac{\zeta^{g-s}\mathrm{d}\zeta}{2\sqrt{R(\zeta)}},~~~(s=1,\cdots,g ).
\end{equation}
Note that for the Hamiltonian function $F(\lambda)$, the corresponding canonical equations are \cite{Arnold}
\begin{equation}\label{3.13}
    \frac{\mathrm{d}}{\mathrm{d} t_{(\lambda)}}\left(\begin{array}{c}
         p_j \\
         q_j
    \end{array}\right)=\left(\begin{array}{c}
         -\partial F(\lambda)/\partial q_j \\
          \partial F(\lambda)/\partial p_j
    \end{array}\right)=W(\lambda,\alpha_j)\left(\begin{array}{c}
         p_j \\
         q_j
    \end{array}\right),\quad 1\leq j\leq N,
\end{equation}
where
\begin{equation}\label{eq:3.14}
    W(\lambda,\mu)
    =\frac{2\mu}{\lambda^2-\mu^2}\left(\begin{array}{cc}
      \mu L^{11}(\lambda)   &  \lambda L^{12}(\lambda)\\
       \lambda L^{21}(\lambda)  &  -\mu L^{11}(\lambda)
    \end{array}\right)+L_{11}(\lambda)\sigma_3,
\end{equation}
and $ t_{(\lambda)}$ denotes the corresponding flow variable.
One can verify that $L(\mu)$ and $W(\lambda,\mu)$ satisfy the Lax equation along the flow  corresponding to  $F(\lambda)$:
\begin{equation}\label{3.15}
    \frac{\mathrm{d}L(\mu)}{\mathrm{d}t_{(\lambda)}}=[W(\lambda,\mu),L(\mu)],\ \ \forall \lambda,\mu\in\mathbb{C}.
\end{equation}
Therefore the matrix $L(\lambda)$ defined in \eqref{eq:3.2} is also called a Lax matrix.
The Lax equation indicates a relation
\begin{equation}\label{eq:3.16}
    \frac{\mathrm{d}L^{12}(\mu)}{\mathrm{d}t_{(\lambda)}}
    =2(W^{11}(\lambda,\mu)L^{12}(\mu)-W^{12}(\lambda,\mu)L^{11}(\mu)).
\end{equation}
Meanwhile, from \eqref{eq:3.2} we can rewrite $L^{12}(\lambda)$ as
\begin{equation}\label{eq:3.12}
    L^{12}(\lambda)=-\lambda<Ap,p>\frac{s(\zeta)}{\alpha(\zeta)},
    ~~s(\zeta)=\prod_{j=1}^g(\zeta-\mu_j^2).
\end{equation}
Calculating equation  \eqref{eq:3.16} at the point $\mu=\mu_k$, we have
\begin{equation}\label{eq:3.17}
\frac{1}{2(1+<p,q>)\sqrt{R(\mu_k^2)}}\frac{\mathrm{d}(\mu_k^2)}{\mathrm{d}t_{\lambda}}
=\frac{\zeta}{\alpha(\zeta)}\frac{s(\zeta)}{(\zeta-\mu_k^2) s^{\prime}(\mu_k^2)}.
\end{equation}
Then, resorting to the interpolation formula of polynomial $s(\zeta)$, we arrive at
\begin{equation}\label{eq:3.18}
\sum_{k=1}^g \frac{(\mu_k^2)^{g-i}}{2(1+<p,q>)\sqrt{R(\mu_k^2)}}
\frac{\mathrm{d}(\mu_k^2)}{\mathrm{d}t_{\lambda}}
    =\frac{\zeta^{g-i+1}}{\alpha(\zeta)},~~~(i=1, 2,\cdots,g).
\end{equation}
Next, for the basis \eqref{eq:3.11}, i.e. $\{\omega^{\prime}_1, \omega^{\prime}_2, \cdots, \omega^{\prime}_g\}$,
we introduce the quasi Abel-Jacobi variables
\begin{equation}
\phi_i^{\prime}=\sum_{k=1}^g\int_{\mathfrak p_0}^{\mathfrak p(\mu_k^2)}\omega_i^{\prime},
~~~ (i=1,2,\cdots, g),
\end{equation}
where $\mathfrak p_0$ is a given point on $\mathcal{R}$
and $\mathfrak p(\mu_k^2)$ a point on $\mathcal{R}$ as well, with $\zeta=\mu_k^2$.
Then  we can get
\begin{equation}\label{eq:3.19}
 \{\phi_i^{\prime},F(\lambda)\}
 =\frac{(1+<p,q  >)\zeta^{g-i+1}}{\alpha(\zeta)}
 =\sum_{k=0}^{\infty}\{\phi_{i}^{\prime},F_k\}\zeta^{-k}
 =(1+<p,q>)\sum_{k=-\infty}^{\infty}A_{k-i} \zeta^{-k},
\end{equation}
with $A_0=1$ and $A_{-j}=0, \forall j\in\mathbb{Z}^+$.
Consequently, we arrive at
\begin{equation}\label{eq:3.21}
    \frac{\partial{(\phi_1^{\prime},\cdots,\phi_g^{\prime})}}{\partial{(t_1,\cdots t_g)}}
    =(\{\phi_i^{\prime},F_k\})_{g \times g}=(1+<p,q>)\left( \begin{array}{ccccc}
	1&		A_1&		A_2&		\cdots&		A_{g-1}\\
	&		1&		A_1&		\cdots&		A_{g-2}\\
	&		&		\ddots&		\ddots&		\vdots\\
	&		&		&		\ddots&		A_1\\
	&		&		&		&		1\\
\end{array} \right),
\end{equation}
where $t_k$ is the flow variable for the Hamiltonian function $F_k$.
We should note that $g=N-1$.
Now we are ready to present our result about the property of $F(\lambda)$.

\begin{lemma}\label{lem-2-1}
$F_0,F_1,\cdots,F_{N-1}$ are functionally independent in the open set $\left\{\mathrm{d}F_0\neq0\right\}$.
\end{lemma}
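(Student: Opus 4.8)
The assertion is that on the open set $\{dF_0\neq 0\}$ the $N$ differentials $dF_0,\dots,dF_{N-1}$ are linearly independent (equivalently, the Jacobian of $(F_0,\dots,F_{N-1})$ has full rank $N=g+1$ there). My plan is to detect any linear relation among them by pairing against the quasi Abel--Jacobi variables $\phi_1',\dots,\phi_g'$, whose Poisson brackets with the $F_k$'s are already known explicitly from \eqref{eq:3.19} and assembled into the triangular matrix \eqref{eq:3.21}. The point is that this matrix is exactly the non-degenerate pairing that distinguishes $F_1,\dots,F_g$, while $F_0$ can be peeled off separately.

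First I would fix a point of $\{dF_0\neq 0\}$ and assume a relation $\sum_{k=0}^{g}c_k\,dF_k=0$ holds there, with coefficients $c_k$ (allowed to be functions); the goal is to force every $c_k=0$. Because the bracket $\{\phi_i',F_k\}$ in the footnote convention is $\mathbb{R}$-bilinear in the gradients $\nabla\phi_i'$ and $\nabla F_k$ at the given point, the relation $\sum_k c_k\nabla F_k=0$ yields $\sum_{k=0}^{g}c_k\{\phi_i',F_k\}=0$ for each $i=1,\dots,g$. Here I use that each $\phi_i'$, although defined by path integrals and only globally multivalued, enters through its single-valued differential, so these brackets are well defined. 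Next I would read off from \eqref{eq:3.19}, by matching powers of $\zeta$, the identity $\{\phi_i',F_k\}=(1+<p,q>)A_{k-i}$; in particular $\{\phi_i',F_0\}=(1+<p,q>)A_{-i}=0$ for all $i\geq 1$, since $A_{-i}=0$. Thus the $c_0$ term disappears and the surviving system is $\sum_{k=1}^{g}c_k\{\phi_i',F_k\}=0$, whose coefficient matrix is exactly $(\{\phi_i',F_k\})_{g\times g}$ of \eqref{eq:3.21}.

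To finish I would invert this matrix. Since $F_0=-\tfrac14(1+<p,q>)^2$ depends only on $<p,q>$, we have $dF_0=-\tfrac12(1+<p,q>)\,d<p,q>$, so on $\{dF_0\neq 0\}$ necessarily $1+<p,q>\neq 0$. By \eqref{eq:3.21} the matrix $(\{\phi_i',F_k\})_{g\times g}$ is $(1+<p,q>)$ times a unipotent upper-triangular matrix, hence its determinant is $(1+<p,q>)^{g}\neq 0$. Therefore $c_1=\dots=c_g=0$, and the relation collapses to $c_0\,dF_0=0$; as $dF_0\neq 0$ this gives $c_0=0$ as well. Hence no nontrivial relation exists and the functions are independent on $\{dF_0\neq 0\}$.

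I do not expect a serious obstacle in this argument, precisely because the heavy computation is already done: the closed form \eqref{eq:3.19} and its triangular consequence \eqref{eq:3.21} come from the Lax equation \eqref{3.15} and the interpolation step behind \eqref{eq:3.18}, while \eqref{FF} supplies involutivity. Given these, the lemma is pure linear algebra. The one place to stay attentive is the twofold role of the scalar $1+<p,q>$: it is both the factor controlling $dF_0$ and the factor that makes \eqref{eq:3.21} non-degenerate, which is exactly why the natural domain of independence is the open set $\{dF_0\neq 0\}$ rather than all of phase space. Had \eqref{eq:3.19} not been available, the genuine work would have been deriving it, and that --- not the final packaging --- would have been the hard part.
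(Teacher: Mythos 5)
Your proposal is correct and follows essentially the same route as the paper's own proof: pair the assumed relation $\sum_k c_k\,\mathrm{d}F_k=0$ against the quasi Abel--Jacobi variables $\phi_i'$, use $\{\phi_i',F_0\}=0$ to drop the $c_0$ term, invert the triangular matrix \eqref{eq:3.21} to kill $c_1,\dots,c_g$, and conclude $c_0=0$ from $\mathrm{d}F_0\neq 0$. Your only addition is making explicit that $\mathrm{d}F_0\neq 0$ forces $1+\langle p,q\rangle\neq 0$, so the diagonal of \eqref{eq:3.21} is genuinely nonzero --- a point the paper's back-substitution uses but leaves implicit.
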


\begin{proof}
Assume there exist constants $\{c_k\}$ such that $\sum_{k=0}^{g}c_k\mathrm{d}F_k=0$,
and note that $g=N-1$.
Then we have
\begin{align}
 0=\omega^2\Bigl( I\sum_{k=0}^gc_k \mathrm{d} F_k, ~I\mathrm{d}\phi_i^{\prime}\Bigr)
 =\sum_{k=0}^gc_k\{{\phi_i^{\prime}},F_k\}=\sum_{k=1}^{N-1} c_k \{\phi_i^{\prime},F_k\},\label{eq:3.22}
\end{align}
where in the last step $g=N-1$ and  $(\phi_i^{\prime},F_0)=A_{-i}=0$ have been used.
This is an equation set with $N-1$ equations and $N-1$ unknown $\{c_k\}$.
In light of \eqref{eq:3.21} we can successively determine
$c_{N-1}=c_{N-2}=\cdots = c_{1}=0$.
Finally, only $c_0\mathrm{d}F_0=0$ is left,  which implies $c_0=0$ if $dF_0 \neq 0$.
Thus the proof is completed.

\end{proof}

Finally, we  introduce our integrable  map:
\begin{subequations}\label{eq:3.23}
\begin{align}
&S_\beta:\mathbb{R}^{2N}\rightarrow\mathbb{R}^{2N},\quad (p,q)\mapsto(\breve{p},\breve{q}),\label{eq:3.23a}\\
&{\breve{p}_j\choose \breve{q}_j}=\frac{1}{\sqrt{\beta^2-\alpha_j^2}}
D^{(\beta)}(\alpha_j){p_j\choose q_j},\quad 1\leq j\leq N\label{eq:3.23b},
\end{align}
where
\begin{equation}\label{eq:2.2}
D^{(\beta)}(\alpha_j)\doteq \frac{1}{\sqrt{\Upsilon}}\left(\begin{array}{cc}
                    -\alpha_j^2+\beta^2+a\breve{b}&\alpha_j a \\
                    \alpha_j \breve{b}&\beta^2
                   \end{array}\right),\quad \Upsilon=\beta^2+a\breve{b}.
\end{equation}
\end{subequations}
Note that this agrees with the spectral problem by some gauge transformations
(see Sec.4.2.2 in \cite{LZZ}) and deformations.

To make the map $S_\beta$ to be symplectic and integrable
\cite{Arnold,Goldstein,Jordan,Veselov,Veselov1,Bruschi,Suris},
we consider the following constraints
\begin{equation}\label{eq:3.25}
a=\frac{<Ap,p>}{1+<p,q>}, ~~\breve{b}=-\frac{<A\breve{q},\breve{q}>}{1+<p,q>},
\end{equation}
which indicates
\begin{subequations}\label{eq:3.24}
\begin{align}
&P_1(a)\equiv(1+<p,q>)a-<Ap,p>=0,\label{eq:3.24a}\\
&P_2 (\breve{b}/\beta ) \equiv
\frac{1}{\beta^3}\{\Upsilon[<A\breve{q},\breve{q}>+\breve{b}(1+<p,q>)]-\breve{b}^2P_1(a)\}=0. \label{eq:3.24b}
\end{align}
\end{subequations}
Among them, in light of \eqref{eq:3.23}, $P_2 (\breve{b}/\beta )$ can be converted into
\begin{equation}\label{eq:3.025}
P_2 (\breve{b}/\beta )= -\bigg( \frac{\breve{b}}{\beta} \bigg)^2L_{12}(\beta)
+2\frac{\breve{b}}{\beta}L_{11}(\beta)+L_{21}(\beta).
\end{equation}
Thus, the following explicit forms of the constraints are obtained:
\begin{subequations}\label{eq:3.26}
\begin{align}
&a=\frac{<Ap,p>}{1+<p,q>}, \label{eq:3.26b}\\
&b=\frac{1}{\beta^2 Q_\beta(Ap,p)}\Big(\frac{1+<p,q>}{2}+Q_\beta(A^2p,q)\pm \displaystyle\frac{(1+<p,q>)\sqrt{R(\beta^2)}}{2\alpha(\beta^2)}\Big), \label{eq:3.26c}
\end{align}
\end{subequations}
where $b$ is single-valued as a function of $\mathfrak{p}(\beta^{2})\in \mathcal{R}$.
For convenience we denote the above expression by
\begin{equation}\label{eq:3.25a}
(a,b)=f_{\beta}(p,q).
\end{equation}
Note that with these constraints, the map $S_{\beta}$ defined in \eqref{eq:3.23}
is no linger linear with respect to $(p,q)$.
In the following we show the map is symplectic.

\begin{proposition}\label{P-3-1}
The nonlinearized map $S_{\beta}$ is symplectic and Liouville integrable,
having the invariants $F_0,\cdots, F_{N-1}$ defined by \eqref{eq:3.4}.
\end{proposition}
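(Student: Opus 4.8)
The statement bundles two claims: that $S_\beta$ preserves the symplectic structure, and that it is Liouville integrable with invariants $F_0,\dots,F_{N-1}$. The plan is to dispatch the integrability claim first, since most of its ingredients are already available, and then turn to symplecticity, which I expect to be the genuinely laborious part. Liouville integrability of a symplectic map requires $N$ functionally independent functions that are pairwise in involution \emph{and} invariant under the map. Involutivity is already recorded in \eqref{FF}, and functional independence is exactly Lemma \ref{lem-2-1}; so the only missing piece is that each $F_k$ is preserved by $S_\beta$.

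To prove invariance I would exhibit a discrete Lax (intertwining) relation
\[
L(\lambda;\breve p,\breve q)\,D^{(\beta)}(\lambda)=D^{(\beta)}(\lambda)\,L(\lambda;p,q),
\]
where $D^{(\beta)}(\lambda)$ is the Darboux matrix \eqref{eq:2.2} read with the spectral parameter $\lambda$ in place of $\alpha_j$. This is to be checked entrywise by inserting the definition \eqref{eq:3.2} of $L$ together with the constraints \eqref{eq:3.26}; the constraints \eqref{eq:3.25} are tailored precisely so that the resulting scalar identities collapse. A direct $2\times2$ determinant computation gives $\det D^{(\beta)}(\lambda)=\beta^2-\lambda^2$, which is not identically zero, so the intertwining relation yields $F(\lambda;\breve p,\breve q)=\det L(\lambda;\breve p,\breve q)=\det L(\lambda;p,q)=F(\lambda;p,q)$. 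Expanding this identity in powers of $\zeta^{-1}=\lambda^{-2}$ as in \eqref{eq:3.4} gives $F_k\circ S_\beta=F_k$ for all $k$, in particular for $k=0,\dots,N-1$. Combined with \eqref{FF} and Lemma \ref{lem-2-1}, this settles Liouville integrability.

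For symplecticity I would show that $S_\beta$ pulls the canonical two-form $\omega^2=\sum_{j=1}^N\mathrm{d}p_j\wedge\mathrm{d}q_j$ back to itself. Writing $M_j=(\beta^2-\alpha_j^2)^{-1/2}D^{(\beta)}(\alpha_j)$ for the block acting in \eqref{eq:3.23}, the determinant identity above gives $\det M_j=1$, so each block lies in $SL(2)$; were the $M_j$ constant this would already force $\mathrm{d}\breve p_j\wedge\mathrm{d}\breve q_j=\mathrm{d}p_j\wedge\mathrm{d}q_j$. The difficulty is that $M_j$ depends on all the phase variables through $a$ and $\breve b$ via \eqref{eq:3.25}, so differentiating $\breve p_j$ and $\breve q_j$ produces extra terms carrying $\mathrm{d}a$ and $\mathrm{d}\breve b$. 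The plan is therefore to compute $\mathrm{d}\breve p_j$ and $\mathrm{d}\breve q_j$ explicitly, form $\sum_j\mathrm{d}\breve p_j\wedge\mathrm{d}\breve q_j$, and show that after summation over $j$ all the cross terms proportional to $\mathrm{d}a$ and $\mathrm{d}\breve b$ cancel by virtue of \eqref{eq:3.24}; equivalently, one shows that the Liouville one-form obeys $\sum_j(\breve p_j\,\mathrm{d}\breve q_j-p_j\,\mathrm{d}q_j)=\mathrm{d}\Psi$ for an explicit generating function $\Psi$, whence $\omega^2$ is preserved.

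This final cancellation is the main obstacle: it is exactly where the precise form of the constraints \eqref{eq:3.25}--\eqref{eq:3.26} and of the normalization $\Upsilon=\beta^2+a\breve b$ is indispensable, and it cannot be reduced to the $SL(2)$ property of the individual blocks. I expect the cleanest bookkeeping to proceed by expressing $\mathrm{d}a$ and $\mathrm{d}\breve b$ in terms of $\{\mathrm{d}p_j,\mathrm{d}q_j\}$ from \eqref{eq:3.25}, substituting, and collecting the coefficient of each $\mathrm{d}p_i\wedge\mathrm{d}q_j$; the antisymmetry of the wedge together with the symmetric structure of $\langle Ap,p\rangle$ and $\langle A\breve q,\breve q\rangle$ should annihilate the unwanted contributions. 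With symplecticity and invariance both in hand, the two assertions of the proposition follow at once.
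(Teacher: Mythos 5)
Your proposal follows essentially the same route as the paper: invariance of $F(\lambda)$ is obtained there from the intertwining identity \eqref{eq:3.28}, $L(\lambda;\breve p,\breve q)D^{(\beta)}(\lambda;a,b)=D^{(\beta)}(\lambda;a,b)L(\lambda;p,q)$, whose error terms are exactly multiples of $P_1(a)$ and $P_2(\breve b/\beta)$, followed by taking determinants, and then combined with \eqref{FF} and Lemma \ref{lem-2-1}. For symplecticity the paper carries out precisely the direct computation you outline, arriving at the closed-form identity \eqref{eq:3.27}, in which the difference $\sum_j(\mathrm{d}\breve p_j\wedge\mathrm{d}\breve q_j-\mathrm{d}p_j\wedge\mathrm{d}q_j)$ is expressed through $\mathrm{d}\bigl(P_1(a)/\Upsilon\bigr)$ and $\mathrm{d}\bigl(P_2(\breve b/\beta)/\Upsilon\bigr)$ and hence vanishes identically under the constraints \eqref{eq:3.25}, confirming the cancellation you anticipated.
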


\begin{proof}
By using equation (\ref{eq:3.23b}), the symplectic property is obtained by calculating
\begin{equation}\label{eq:3.27}
\sum_{j=1}^N ({\rm d}\breve{p}_j\wedge{\rm d}\breve{q}_j-{\rm d}p_j\wedge{\rm d}q_j)
 =-\frac{1}{2}{\rm d}\frac{P_1(a)}{\beta^2+a\breve{b}}\wedge{\rm d}\breve{b}
 +\frac{\beta}{2}{\rm d}a\wedge {\rm d}\frac{ P_2(\frac{\breve{b}}{\beta})}{\beta^2+a\breve{b}},
\end{equation}
which vanishes when constraint \eqref{eq:3.25} makes sense.
Besides, in light of the constraint (\ref{eq:3.26}), it turns out that
\begin{align}
&L(\lambda;\breve{p},\breve{q})D^{(\beta)}(\lambda;a,b)-D^{(\beta)}(\lambda;a,b)L(\lambda;p,q)) \nonumber \\
=&\frac{1}{2\Upsilon^{\frac{3}{2}}}\left(\begin{array}{cc} (\lambda^2+\Upsilon)\breve{b} &\lambda(2\beta^2+a\breve{b}) \\
  -\lambda \breve{b}^2 & -\beta^2\breve{b}
\end{array}\right)P_1(a) +
\frac{1}{2\Upsilon^{\frac{3}{2}}}\left(\begin{array}{cc}
  -(\lambda^2+\Upsilon)\beta a & \lambda\beta a^2 \\
  -\lambda\beta(2\beta^2+a\breve{b}) & \beta^3a
\end{array}\right)
P_2\bigg(\frac{\breve{b}}{\beta}\bigg)=0.
\label{eq:3.28}
\end{align}
Thus we arrive at $\det L(\lambda;\breve{p},\breve{q})=\det L(\lambda;p,q)$,
which indicates that
\begin{equation}\label{2.29}
F(\lambda;\breve{p},\breve{q})=F(\lambda;p,q)
\end{equation}
and then $F_0,\cdots,F_{N-1}$ are invariant under the action of
the map $S_{\beta}$.
The involutive property of $\{F_k\}$ has been proved in \eqref{FF}.

\end{proof}

\section{Evolution of the discrete potentials}\label{sec-3}

The functions $(a,b)$ expressed in \eqref{eq:3.26} will provide solutions to the discrete CLL system \eqref{eq:1.1}.
The purpose of this section is to introduce evolution to $(a,b)$
and reconstruct them in terms of the Riemann theta functions.

First, by iterating the nonlinear integrable symplectic map $S_{\beta}$ defined in \eqref{eq:3.23},
we get a discrete phase flow
\begin{equation}\label{3.1}
(p(m),\,q(m))=S_{\beta}(p(m-1),\,q(m-1)=\cdots =(S_{\beta})^m(p(0),q(0))=S_{\beta}^m(p(0),q(0))
\end{equation}
with any given initial point $(p(0),q(0))\in \mathbb{R}^{2N}$.
Along this $S_\beta^m$-flow, the potentials given by  \eqref{eq:3.26} (also see \eqref{eq:3.25a})  are expressed as
\begin{equation}\label{eq:4.1}
(a_m,b_m)=f_\beta\big(p(m),\,q(m)\big).
\end{equation}
And we get the commutative relation by equation \eqref{eq:3.28}
\begin{equation}\label{eq:4.4}
L_{m+1}(\lambda)D_m(\lambda)=D_m(\lambda)L_m(\lambda),
\end{equation}
where $L_m(\lambda)=L\big(\lambda;p(m),q(m)\big)$ is defined as the form \eqref{eq:3.2}
and
\begin{equation}\label{eq:4.5}
D_m(\lambda)=D^{(\beta)}(\lambda;a_m,b_m)
=\frac{1}{\sqrt{\Upsilon_m}}\left(\begin{array}{cc}
-\lambda^2+ \Upsilon_m&\lambda a_m\\
\lambda b_{m+1}&\beta^2
\end{array}\right),
\ \ \Upsilon_m=\beta^2+a_mb_{m+1}.
\end{equation}
In addition, from \eqref{eq:4.4} it is easy to find
\begin{subequations}\label{eq:4.6}
\begin{align}
&L_m(\lambda)M(m,\lambda)=M(m,\lambda)L_0(\lambda),\label{eq:4.6c}\\
&M(m,\lambda)=D_{m-1}(\lambda)D_{m-2}(\lambda)\cdots D_0(\lambda),\label{eq:4.6a}\\
&\det M(m,\lambda)=(\beta^2-\zeta)^m.\label{eq:4.6b}
\end{align}
\end{subequations}

Second, since $L^{2}(\lambda;p(m),q(m))=-F(\lambda;p(m),q(m))I$, it indicates that
$L_m(\lambda)$ has two distinct eigenvalues $\pm \rho_{\lambda}=\pm\sqrt{-F(\lambda)}$
which are independent of $m$ due to the property \eqref{2.29}.
We denote the two corresponding eigenvectors by
$h_{\pm}(m,\lambda)=(h_{\pm}^{(1)}(m,\lambda), h_{\pm}^{(2)}(m,\lambda))^T$.
For convenience in the following we normalize $h_{\pm}(0,\lambda)$
(satisfying $ L_0(\lambda)h_{\pm}(0,\lambda)=\pm \rho_{\lambda} h_{\pm}(0,\lambda)$)
to be
\begin{subequations}
\begin{equation}\label{eq:0a4.12}
h_{\pm}(0,\lambda)=(c_\lambda^\pm,1)^T,
\end{equation}
where
\begin{equation}\label{eq:4.12}
c_\lambda^\pm=\frac{L_0^{11}(\lambda)\pm
\rho_{\lambda}}{L_0^{21}(\lambda)}
=\frac{-L_0^{12}(\lambda)}{L_0^{11}(\lambda)\mp \rho_{\lambda}}.
\end{equation}
\end{subequations}
Thus, from \eqref{eq:4.6c} and \eqref{eq:0a4.12} we may write the eigenvectors as
\begin{equation}
h_{\pm}(m,\lambda)=M(m,\lambda){c_\lambda^\pm\choose 1},\label{eq:4.10}
\end{equation}
by letting $M(0,\lambda)$ being the unit matrix $I$. In addition, from \eqref{eq:4.4} we have
\begin{equation}
h_{\pm}(m+1,\lambda)=D_m(\lambda)h_{\pm}(m,\lambda).\label{eq:4.9b}
\end{equation}

In the third part of this section, we will study these eigenvectors $h_{\pm}(m,\lambda)$
under  the theory of Riemann surface \cite{Farkas,Griffiths,Mumford}.
Let us introduce four meromorphic functions
(Baker-Akhiezer functions) on $\mathcal R$,
\begin{equation}\label{eq:4.13}
\begin{split}
&\mathfrak{h}^{(1)}\big(m,\mathfrak p(\lambda^2)\big)=\lambda h_+^{(1)}(m,\lambda),\quad
\mathfrak{h}^{(1)}\big(m, \tau \mathfrak p(\lambda^2)\big)=\lambda h_-^{(1)}(m,\lambda),\\
&\mathfrak{h}^{(2)}\big(m,\mathfrak p(\lambda^2)\big)=h_+^{(2)}(m,\lambda),\quad
\mathfrak{h}^{(2)}\big(m, \tau\mathfrak  p(\lambda^2)\big)=h_-^{(2)}(m,\lambda),
\end{split}\end{equation}
 and the elliptic variables $\mu^{2}_j,\,\nu^{2}_j$ defined by
\begin{subequations}\label{eq:4.14}
\begin{align}
&\lambda^{-1}L_m^{12}(\lambda)=-Q_\lambda(Ap(m),p(m))
=-\frac{(1+<p(m),q(m)>)a_m}{\alpha(\zeta)}\prod_{j=1}^{N-1}
\big(\zeta-\mu_j^2(m)\big),\\
&\lambda^{-1}L_m^{21}(\lambda)=Q_\lambda(Aq(m),q(m))
=-\frac{(1+<p(m),q(m)>)b_m}{\alpha(\zeta)}\prod_{j=1}^{N-1}
\big(\zeta-\nu_j^2(m)\big).
\end{align}
\end{subequations}
As a consequence, by using equations (\ref{eq:4.6}) and (\ref{eq:4.10}) we obtain
\begin{subequations}\label{eq:4.15}
\begin{align}
&\mathfrak{h}^{(1)}\big(m,\mathfrak p(\lambda^2)\big)\cdot
\mathfrak{h}^{(1)}\big(m, \tau\mathfrak p(\lambda^2)\big)
=-\zeta(\beta^2-\zeta)^m\frac{a_m}{b_0}\prod_{j=1}^{N-1}\frac{\zeta-\mu_j^2(m)}{\zeta-\nu_j^2(0)},
\label{eq:4.15a}\\
&\mathfrak{h}^{(2)}\big(m,\mathfrak p(\lambda^2)\big)\cdot
\mathfrak{h}^{(2)}\big(m, \tau\mathfrak p(\lambda^2)\big)
=(\beta^2-\zeta)^m\frac{b_m}{b_0}\prod_{j=1}^{N-1}\frac{\zeta-\nu_j^2(m)}{\zeta-\nu_j^2(0)}.\label{eq:4.15b}
\end{align}
\end{subequations}

The asymptotic behaviors of the above meromorphic functions are described as the following.

\begin{lemma}\label{L-4.2}
The following asymptotic behaviors hold when  $\zeta=\lambda^2\sim\infty$:
\begin{subequations}\label{eq:4.16}
\begin{align}
&\mathfrak{h}^{(1)}\big(m,\mathfrak p(\lambda^2)\big)
=\frac{(-\zeta)^{m+1}}{b_0\sqrt{\Upsilon_0\Upsilon_1\cdots \Upsilon_{m-1}}}\big(1+O(\zeta^{-1})\big),\label{eq:4.16a}\\
&\mathfrak{h}^{(1)}\big(m, \tau\mathfrak p(\lambda^2)\big)
=a_m\sqrt{\Upsilon_0\Upsilon_1\cdots \Upsilon_{m-1}}\big(1+O(\zeta^{-1})\big),\label{eq:4.16b}\\
&\mathfrak{h}^{(2)}\big(m,\mathfrak p(\lambda^2)\big)
=\frac{b_m(-\zeta)^m}{b_0\sqrt{\Upsilon_0\Upsilon_1\cdots \Upsilon_{m-1}}}\big(1+O(\zeta^{-1})\big),\label{eq:4.16c}\\
&\mathfrak{h}^{(2)}\big(m, \tau\mathfrak p(\lambda^2)\big)
=\sqrt{\Upsilon_0\Upsilon_1\cdots \Upsilon_{m-1}}\big(1+O(\zeta^{-1})\big).\label{eq:4.16d}
\end{align}\end{subequations}
\end{lemma}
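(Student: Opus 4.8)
The plan is to prove the four asymptotics in two stages: first establish the two ``$\mathfrak p$-sheet'' formulas \eqref{eq:4.16a} and \eqref{eq:4.16c} by induction on $m$ using the one-step recursion \eqref{eq:4.9b}, and then deduce the two ``$\tau\mathfrak p$-sheet'' formulas \eqref{eq:4.16b} and \eqref{eq:4.16d} by dividing them into the already-available product identities \eqref{eq:4.15a} and \eqref{eq:4.15b}. The point of routing the second pair through the products is to bypass a delicate cancellation, as explained at the end.

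To set up the induction I would first record the large-$\zeta$ behavior of the building blocks. Expanding $Q_\lambda(\xi,\eta)=\zeta^{-1}<\xi,\eta>+O(\zeta^{-2})$ gives $L_0^{11}(\lambda)\to\tfrac12(1+<p(0),q(0)>)$, $L_0^{12}(\lambda)=-<Ap(0),p(0)>/\lambda+O(\lambda^{-3})$ and $L_0^{21}(\lambda)=<Aq(0),q(0)>/\lambda+O(\lambda^{-3})$, while from \eqref{eq:3.9} one has $\rho_\lambda=(1+<p,q>)\sqrt{R(\zeta)}/(2\alpha(\zeta))\to\tfrac12(1+<p(0),q(0)>)$ on the sheet carrying $\mathfrak p$. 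Feeding these into \eqref{eq:4.12} and using $b_0=-<Aq(0),q(0)>/(1+<p(0),q(0)>)$ (read off from \eqref{eq:4.14} as $\zeta\sim\infty$) yields $c_\lambda^+\sim-\lambda/b_0$, which together with the normalization $h_+^{(2)}(0,\lambda)=1$ reproduces \eqref{eq:4.16a} and \eqref{eq:4.16c} at $m=0$ once the empty product is set to $1$.

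For the inductive step I would substitute the hypotheses into \eqref{eq:4.9b} written out through \eqref{eq:4.5}, namely
\begin{align*}
&\sqrt{\Upsilon_m}\,\mathfrak{h}^{(1)}\big(m{+}1,\mathfrak p(\lambda^2)\big)
=(-\zeta+\Upsilon_m)\,\mathfrak{h}^{(1)}\big(m,\mathfrak p(\lambda^2)\big)+\zeta\,a_m\,\mathfrak{h}^{(2)}\big(m,\mathfrak p(\lambda^2)\big),\\
&\sqrt{\Upsilon_m}\,\mathfrak{h}^{(2)}\big(m{+}1,\mathfrak p(\lambda^2)\big)
=b_{m+1}\,\mathfrak{h}^{(1)}\big(m,\mathfrak p(\lambda^2)\big)+\beta^2\,\mathfrak{h}^{(2)}\big(m,\mathfrak p(\lambda^2)\big),
\end{align*}
where the relations $\mathfrak{h}^{(1)}(m,\mathfrak p)=\lambda h_+^{(1)}(m,\lambda)$ and $\mathfrak{h}^{(2)}(m,\mathfrak p)=h_+^{(2)}(m,\lambda)$ from \eqref{eq:4.13} have been used to convert the $\lambda$-factors into $\zeta$. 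In each line the first term on the right is the dominant one, and the elementary identity $-\zeta\,(-\zeta)^{m+1}=(-\zeta)^{m+2}$ together with the new prefactor $1/\sqrt{\Upsilon_m}$ promotes the denominator $\sqrt{\Upsilon_0\cdots\Upsilon_{m-1}}$ to $\sqrt{\Upsilon_0\cdots\Upsilon_m}$, so that \eqref{eq:4.16a} and \eqref{eq:4.16c} hold at level $m+1$; one also checks routinely that the subleading contributions stay within the stated $1+O(\zeta^{-1})$ error.

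Finally, for the $\tau\mathfrak p$-sheet I would divide rather than expand $h_-$ directly. Since the right-hand sides of \eqref{eq:4.15a} and \eqref{eq:4.15b} behave as $a_m(-\zeta)^{m+1}/b_0$ and $b_m(-\zeta)^{m}/b_0$ when $\zeta\sim\infty$ (using $(\beta^2-\zeta)^m\sim(-\zeta)^m$, $-\zeta(-\zeta)^m=(-\zeta)^{m+1}$, and that the ratios $\prod_j(\zeta-\mu_j^2)/(\zeta-\nu_j^2)$ tend to $1$), dividing by \eqref{eq:4.16a} and \eqref{eq:4.16c} gives at once $\mathfrak{h}^{(1)}(m,\tau\mathfrak p)\sim a_m\sqrt{\Upsilon_0\cdots\Upsilon_{m-1}}$ and $\mathfrak{h}^{(2)}(m,\tau\mathfrak p)\sim\sqrt{\Upsilon_0\cdots\Upsilon_{m-1}}$, i.e. \eqref{eq:4.16b} and \eqref{eq:4.16d}. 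The main obstacle, and the reason for this detour, is that a head-on expansion of $h_-(m,\lambda)=M(m,\lambda)(c_\lambda^-,1)^T$ with $c_\lambda^-\sim a_0/\lambda$ produces two contributions of equal leading order $\zeta^{m-1/2}$ whose leading parts cancel exactly (because $-\zeta/\lambda+\lambda=0$), so that the genuine size $\zeta^{-1/2}$ of $h_-^{(1)}$ would only emerge after a long subleading computation; the product identities \eqref{eq:4.15} sidestep this cancellation completely.
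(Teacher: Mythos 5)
Your proposal is correct and follows essentially the same route as the paper: the plus-sheet asymptotics \eqref{eq:4.16a}, \eqref{eq:4.16c} come from $\lambda c_\lambda^+\sim-\zeta/b_0$ propagated through products of the matrices $D_m(\lambda)$, and the minus-sheet formulas \eqref{eq:4.16b}, \eqref{eq:4.16d} are then obtained by dividing the product identities \eqref{eq:4.15}, which is exactly the paper's argument. The only cosmetic difference is that you run an induction directly on the components of $h_+$ via the recursion \eqref{eq:4.9b}, whereas the paper first records the entrywise large-$\zeta$ asymptotics of $M(m,\lambda)$ and then contracts with $(c_\lambda^+,1)^T$; your closing remark about the exact leading-order cancellation in a head-on expansion of $h_-$ correctly identifies why both arguments route the $\tau\mathfrak p$-sheet through \eqref{eq:4.15}.
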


\begin{proof}
First, from the expression \eqref{eq:4.6a}, we have (for $ m\geq2$)
\begin{subequations}
\begin{align}\label{4.7}
&M^{11}(m,\lambda)=\frac{(-\zeta)^m}{\sqrt{\Upsilon_0\Upsilon_1\cdots \Upsilon_{m-1}}}\big(1+O(\zeta^{-1})\big),\\
&\lambda M^{12}(m,\lambda)=\frac{-a_0(-\zeta)^m}{\sqrt{\Upsilon_0\Upsilon_1\cdots \Upsilon_{m-1}}}\big(1+O(\zeta^{-1})\big),\\
&\lambda M^{21}(m,\lambda)=\frac{-b_m(-\zeta)^m}{\sqrt{\Upsilon_0\Upsilon_1\cdots \Upsilon_{m-1}}}\big(1+O(\zeta^{-1})\big),\\
&M^{22}(m,\lambda)=\frac{-a_0b_m(-\zeta)^{m-1}}{\sqrt{\Upsilon_0\Upsilon_1\cdots \Upsilon_{m-1}}}\big(1+O(\zeta^{-1})\big).
\end{align}
\end{subequations}
The first three relations are still valid for $m=1$,
while $M^{22}(1,\lambda)$ is a constant when $m=1$: $M^{22}(1,\lambda)=\beta^2/\sqrt{\Upsilon_{0}}$.
In addition, by using equations (\ref{eq:4.12}) and (\ref{eq:3.2}) we get
\[\lambda c_\lambda^+=(-\zeta/b_0)\big(1+O(\zeta^{-1})\big),~~~ ~(\zeta \sim\infty).\]
Then, from \eqref{eq:4.10} and \eqref{eq:4.13}, we find (\ref{eq:4.16a}) and (\ref{eq:4.16c}).
The other two in \eqref{eq:4.16} can be derived  from \eqref{eq:4.15} by using  (\ref{eq:4.16a}) and (\ref{eq:4.16c}).

\end{proof}

\begin{lemma}\label{L-4.3}
When $\zeta=\lambda^2\sim 0$, we have the following asymptotic behaviors:
\begin{subequations}\label{eq:4.17}
\begin{align}
&\mathfrak{h}^{(1)}\big(m,\mathfrak p(\lambda^2)\big)
=-\frac{1-<p(0),q(0)>}{<A^{-1}q(0),q(0)>}\sqrt{\Upsilon_0\Upsilon_1\cdots \Upsilon_{m-1}}\big(1+O(\zeta)\big),\\
&\mathfrak{h}^{(1)}\big(m, \tau\mathfrak p(\lambda^2)\big)
=-\frac{\zeta\beta^{2m}<A^{-1}p(m),p(m)>
}{(1-<p(0),q(0)>)\sqrt{\Upsilon_0\Upsilon_1\cdots \Upsilon_{m-1}}}\big(1+O(\zeta)\big).
\end{align}
\end{subequations}
\end{lemma}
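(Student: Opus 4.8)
The plan is to mirror the proof of Lemma \ref{L-4.2}, replacing the expansion at $\zeta=\lambda^2\sim\infty$ by the corresponding one at $\zeta\sim 0$. The two ingredients needed are the leading behavior of the transfer-matrix entries $M^{ij}(m,\lambda)$ and of the normalizing scalars $c_\lambda^\pm$ as $\zeta\to 0$; with these in hand, \eqref{eq:4.10} and \eqref{eq:4.13} assemble the first line of \eqref{eq:4.17} directly, and the product identity \eqref{eq:4.15a} then delivers the second.

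First I would expand the factors $D_k(\lambda)$ of \eqref{eq:4.5}. At $\lambda=0$ each $D_k$ is diagonal, $D_k(0)=\mathrm{diag}(\sqrt{\Upsilon_k},\,\beta^2/\sqrt{\Upsilon_k})$; moreover the diagonal (resp.\ off-diagonal) entries of every $D_k$ are even (resp.\ odd) in $\lambda$, a parity that propagates to $M(m,\lambda)=D_{m-1}\cdots D_0$. Hence $M^{11}(m,\lambda)=\sqrt{\Upsilon_0\cdots\Upsilon_{m-1}}\,(1+O(\zeta))$, while $M^{12}(m,\lambda)=O(\lambda)$ so that $\lambda M^{12}(m,\lambda)=O(\zeta)$ is subdominant. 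For $c_\lambda^\pm$ I would use $Q_0(\xi,\eta)=-<A^{-2}\xi,\eta>$ in \eqref{eq:3.2}, giving $L_0^{11}(0)=\tfrac12(1-<p(0),q(0)>)$ and $\lambda^{-1}L_0^{21}(\lambda)\to -<A^{-1}q(0),q(0)>$, and, on the branch of $\rho_\lambda=\sqrt{-F(\lambda)}$ for which $c_\lambda^+$ is singular, $\rho_0=\tfrac12(1-<p(0),q(0)>)$. Then \eqref{eq:4.12} yields $\lambda c_\lambda^+\to -(1-<p(0),q(0)>)/<A^{-1}q(0),q(0)>$, and combining with $M^{11}$ in $\mathfrak h^{(1)}(m,\mathfrak p)=\lambda(M^{11}c_\lambda^+ + M^{12})$ produces the first line of \eqref{eq:4.17}.

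For the second line I would avoid computing the subleading coefficient of $M^{12}$ directly, which is awkward, and instead divide \eqref{eq:4.15a} by the $\mathfrak h^{(1)}(m,\mathfrak p)$ just obtained. Evaluating \eqref{eq:4.14} at $\zeta=0$ (again via $Q_0(\xi,\eta)=-<A^{-2}\xi,\eta>$) expresses the products $\prod_j\mu_j^2(m)$ and $\prod_j\nu_j^2(0)$ in terms of $<A^{-1}p(m),p(m)>$, $<A^{-1}q(0),q(0)>$, $a_m$, $b_0$ and the factors $1+<p(m),q(m)>$, $1+<p(0),q(0)>$. Substituting these into the $\zeta\to 0$ limit of the right-hand side of \eqref{eq:4.15a} and dividing reproduces the target up to the ratio $(1+<p(0),q(0)>)/(1+<p(m),q(m)>)$.

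The main obstacle is thus to show this ratio equals one, i.e.\ that $<p,q>$ is preserved along the $S_\beta$-flow. This follows from Proposition \ref{P-3-1}: by \eqref{2.29} the quantity $F_0=-\tfrac14(1+<p,q>)^2$ is an invariant of $S_\beta$, so $1+<p,q>$ is constant in $m$ (its sign fixed by continuity), whence $<p(m),q(m)>=<p(0),q(0)>$ and the ratio collapses. A secondary point requiring care is the consistent choice of sheet on $\mathcal R$, namely the sign of $\rho_\lambda$ near $\zeta=0$, so that the labels $\mathfrak p$ and $\tau\mathfrak p$ agree with those fixed at $\zeta\sim\infty$ in Lemma \ref{L-4.2}; this is precisely what selects $\rho_0=+\tfrac12(1-<p(0),q(0)>)$ and guarantees that $c_\lambda^+$, rather than $c_\lambda^-$, is the singular branch.
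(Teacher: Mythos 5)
Your proposal follows essentially the same route as the paper's proof: the $\zeta\sim 0$ expansions of $\lambda c_\lambda^+$ and of the entries of $M(m,\lambda)$ (your parity argument reproduces exactly the paper's expansions) yield the first line via \eqref{eq:4.10} and \eqref{eq:4.13}, and the second line is then extracted by dividing the product identity \eqref{eq:4.15a} by the first. The only difference is that you make explicit a step the paper leaves implicit — that the ratio $(1+<p(0),q(0)>)/(1+<p(m),q(m)>)$ produced when \eqref{eq:4.14} is evaluated at $\zeta=0$ equals one, which you correctly justify by the $S_\beta$-invariance of $F_0$ from \eqref{2.29} — a worthwhile refinement rather than a deviation.
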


\begin{proof}
The proof is similar as for Lemma \ref{L-4.2}.
First, equation (\ref{eq:4.12}) yields
$$\lambda c_\lambda^+=-\frac{1-<p(0),q(0)>}{<A^{-1}q(0),q(0)>}\big(1+O(\zeta)\big),~~~~(\lambda\sim 0).$$
In addition, when $\zeta\sim 0$, from \eqref{eq:4.6a} we have
\begin{subequations}\label{eq:4.8}
\begin{align}
&M^{11}(m,\lambda)=\sqrt{\Upsilon_0\Upsilon_1\cdots \Upsilon_{m-1}}\big(1+O(\zeta) \big) (m\geq 1),
\label{eq:4.8a} \\
& \lambda M^{12}(m,\lambda)=O(\zeta)((m\geq 1)),\quad\lambda M^{21}(m,\lambda)=O(\zeta)(m\geq 1),
\label{eq:4.8b}\\
&M^{22}(m,\lambda)=\frac{\beta^{2m}}{\sqrt{\Upsilon_0\Upsilon_1\cdots
\Upsilon_{m-1}}}\big(1+O(\zeta)\big)(m\geq 2),\quad  M^{22}(1,\lambda)=\beta^2/\sqrt{\Upsilon_{0}}.
\label{eq:0.8b}
    \end{align}
\end{subequations}
Then, the two relations in (\ref{eq:4.17}) are the consequence of (\ref{eq:4.10}), (\ref{eq:4.13}) and \eqref{eq:4.15}.

\end{proof}

With the asymptotic behaviors in hand, we can have the divisors of
$\mathfrak h^{(1)}(m,\mathfrak p),\,\mathfrak h^{(2)}(m,\mathfrak p)$ on $\mathcal{R}$
\cite{Farkas,Griffiths,Mumford}:
\begin{subequations}\label{eq:4.18}
\begin{align}
&\mathcal{D}(\mathfrak h^{(1)}(m,\mathfrak p))
=\sum_{j=1}^g\Big(\mathfrak p\big(\mu_j^2(m)\big)-\mathfrak p\big(\nu_j^2(0)\big)\Big)
+\{\mathfrak o_-\}+m\{\mathfrak p(\beta^2)\}-(m+1)\{\infty_+\},\label{eq:4.17a}\\
&\mathcal{D}(\mathfrak h^{(2)}(m,\mathfrak p))=\sum_{j=1}^g\Big(\mathfrak p\big(\nu_j^2(m)\big)
-\mathfrak p\big(\nu_j^2(0)\big)\Big)+m\{\mathfrak p(\beta^2)\}-m\{\infty_+\},\label{eq:4.17b}
\end{align}\end{subequations}
where $\mathfrak o_-=(\zeta=0,\,\xi=-\sqrt{R(0)})$, $ g=N-1$.

Next, to reconstruct $\mathfrak h^{(1)}(m,\mathfrak p),\,\mathfrak h^{(2)}(m,\mathfrak p)$,
let us introduce the Abel-Jacobi variables
\begin{equation}\label{eq:4.19}
\vec{\psi}(m)=\mathcal{A}\Big(\hbox{$\sum$}_{j=1}^g\mathfrak p\big(\mu_j^2(m)\big)\Big), \quad
\vec{\phi}(m)=\mathcal{A}\Big(\hbox{$\sum$}_{j=1}^g\mathfrak p\big(\nu_j^2(m)\big)\Big)
\end{equation}
by using the Abel map $\mathcal A:\mathcal R\rightarrow J(\mathcal R)$ defined as
\begin{equation}\label{eq:0.5}
\mathcal{A}(\mathfrak p)=\int_{\mathfrak{p}_0}^{\mathfrak{p}}\vec{\omega}
=\left(\int_{\mathfrak{p}_0}^{\mathfrak{p}}\omega_1, \int_{\mathfrak{p}_0}^{\mathfrak{p}}\omega_2,
~\cdots,\int_{\mathfrak{p}_0}^{\mathfrak{p}}\omega_g\right)^T,
\end{equation}
with $\mathfrak{p}_0$ being any given point on $\mathcal{R}$.
For further discussions we introduce the canonical basis $a_{1},\cdots,a_{g},b_{1},\cdots,b_{g}$
for the homology group $\mathrm{H}_{1}(\mathcal{R})$.
Then $(\omega^{\prime}_1,\cdots,\omega^{\prime}_g)^{T}$ given by \eqref{eq:3.11} can be normalized as
\begin{subequations}\label{eq:0.0}
\begin{equation}
 \vec{\omega}\doteq (\omega_1,\cdots,\omega_g)^T
=C(\omega^{\prime}_{1},\cdots,\omega^{\prime}_g)^T
\end{equation}
where
\begin{equation}
C=(A_{jk})^{-1}_{g\times g},\ \ A_{jk}=\int_{a_{k}}\omega^{\prime}_{j}.
\end{equation}
\end{subequations}
And we have
\begin{equation}\label{eq:0.1}
\int_{a_{k}}\vec{\omega}=\vec{\delta}_{k},\ \ \int_{b_{k}}\vec{\omega}=\vec{B}_{k},
\end{equation}
which give rise to the period lattice
\[\mathcal T=\{z\in \mathbb{C}^{g}\mid z=\vec{s}+B\vec{r},~~ \vec{s},\vec{r}\in \mathbb{Z}^{g}\},\]
where $\vec{\delta}_{k}$ is the \emph{k}-th column of the usual unit matrix, and $B=(\vec{B}_{1},\vec{B}_{2},\cdots,\vec{B}_{g})=(B_{ij})_{g\times g}$
is symmetric with positive definite imaginary part and serves as a periodic matrix in the Riemann theta function.
Resorting to Toda's dipole technique \cite{Toda}, we obtain
\begin{subequations}\label{eq:4.20}
\begin{align}
&\vec\psi(m)\equiv\vec\phi(0)+m\vec\Omega_\beta+\vec\Omega_0,\quad({\rm mod}\,\mathcal T),\\
&\vec\phi(m)\equiv\vec\phi(0)+m\vec\Omega_\beta,\quad({\rm mod}\,\mathcal T),\\
&\vec\Omega_\beta=\int_{\mathfrak p(\beta^2)}^{\infty_+}\vec\omega,\quad
\vec\Omega_0=\int_{\mathfrak o_-}^{\infty_+}\vec\omega,
\end{align}
\end{subequations}
which indicate that discrete flow $S^{m}_{\beta}$ is straightened out in the Jacobi variety 
$J(\mathcal {R})=\mathbb C^g/\mathcal T$. As a result, by comparing the devisors,
we obtain the Riemann theta function expressions
for $\mathfrak h^{(1)}(m,\mathfrak p)$ and $\mathfrak h^{(2)}(m,\mathfrak p)$ \cite{CaoX-JPA-2012,Farkas,Griffiths,Mumford,XCZ-2022},
\begin{subequations}\label{eq:4.21}
\begin{align}
&\mathfrak h^{(1)}(m,\mathfrak p)=C_m^{(1)}
\frac{\theta(-\mathcal A(\mathfrak p)+\vec\psi(m)+\vec K;B)}
{\theta(-\mathcal A(\mathfrak p)+\vec\phi(0)+\vec K;B)}
\exp\int_{\mathfrak p^{\prime}_0}^{\mathfrak p}(m\,\omega[\mathfrak p(\beta^2),\infty_+]
+\omega[\mathfrak o_-,\infty_+]),\label{eq:4.21a}\\
&\mathfrak h^{(2)}(m,\mathfrak p)=C_m^{(2)}
\frac{\theta(-\mathcal A(\mathfrak p) +\vec\phi(m)+\vec K;B)}
{\theta(-\mathcal A(\mathfrak p)+\vec\phi(0)+\vec K;B)}
\exp\int_{\mathfrak p^{\prime}_0}^{\mathfrak p}m\,\omega[\mathfrak p(\beta^2),\infty_+],\label{eq:4.21b}
\end{align}
\end{subequations}
where $C_m^{(1)}$ and $C_m^{(2)}$ are constant factors, $\vec K$ is the Riemann constant vector,
$\mathfrak p^{\prime}_0$ is any given point on $\mathcal{R}$,
$\omega[\mathfrak p,\mathfrak q]$ is a meromorphical differential that has only simple
poles at $\mathfrak p$ and $\mathfrak q$ with residues $+1$ and $-1$,  respectively.
Here $\theta(z;B)$ is the Riemann theta function
(which is holomorphic)
\begin{equation}\label{eq:0.2}
\theta(z;B)=\sum_{z^{\prime}\in \mathbb{Z}^{g}}\exp\pi\sqrt{-1}(<Bz^{\prime},z^{\prime}>+2<z,z^{\prime}>),
\ \ z\in \mathbb{C}^{g},
\end{equation}
with $<\cdot,\cdot>$ being the scalar product in $\mathbb{C}^{g}$.

In order to obtain explicit expressions of discrete potentials $(a_m,b_m)$
in terms of the Riemann theta functions,
taking $\mathfrak p$ going to $\pm\infty$ in the Baker-Akhiezer functions given in \eqref{eq:4.21}
and then comparing the results with \eqref{eq:4.16},
we can find  the following relations:
\begin{subequations}\label{eq:4.28}
    \begin{align}
          &\frac{(-1)^{m+1}}{b_0\sqrt{\Upsilon_0\Upsilon_1\cdots \Upsilon_{m-1}}}=C_m^{(1)}
\frac{\theta(-\mathcal A(\infty_+)+\vec\psi(m)+\vec K;B)}
{\theta(-\mathcal A(\infty_+)+\vec\phi(0)+\vec K;B)}(\gamma_+^{\beta^2})^m\gamma_+,\label{eq:4.28a}\\
&a_m\sqrt{\Upsilon_0\Upsilon_1\cdots \Upsilon_{m-1}}=C_m^{(1)}
\frac{\theta(-\mathcal A(\infty_-)+\vec\psi(m)+\vec K;B)}
{\theta(-\mathcal A(\infty_-)+\vec\phi(0)+\vec K;B)}\exp\int_{\mathfrak p^{\prime}_0}^{\infty_-}m\,\omega[\mathfrak p(\beta^2),\infty_+]+\omega[\mathfrak o_-, \infty+],\label{eq:4.28b}\\
&\frac{b_m (-1)^m}{b_0\sqrt{\Upsilon_0\Upsilon_1\cdots \Upsilon_{m-1}}}=C_m^{(2)}
\frac{\theta(-\mathcal A(\infty_+)+\vec\phi(m)+\vec K;B)}
{\theta(-\mathcal A(\infty_+)+\vec\phi(0)+\vec K;B)}(\gamma_+^{(\beta^2)})^m,\label{eq:4.28c}\\
&\sqrt{\Upsilon_0\Upsilon_1\cdots \Upsilon_{m-1}}=C_m^{(2)}
\frac{\theta(-\mathcal A(\infty_-)+\vec\phi(m)+\vec K;B)}
{\theta(-\mathcal A(\infty_-)+\vec\phi(0)+\vec K;B)}\exp\int_{\mathfrak p^{\prime}_0}^{\infty_-}m\,\omega[\mathfrak p(\beta^2),\infty_+],\label{eq:4.28d}
\end{align}
\end{subequations}
where
\begin{equation}\label{eq:4.27}
\begin{aligned}
    &\gamma_+^{\beta^2}=\lim_{\mathfrak p\to \infty_+}\frac{1}{\zeta}
    \exp\int_{\mathfrak p^{\prime}_0}^{\mathfrak p} \omega[\mathfrak p(\beta^2),\infty_+],\\
    &\gamma_+=\lim_{\mathfrak p\to \infty_+}\frac{1}{\zeta}
    \exp\int_{\mathfrak p^{\prime}_0}^{\mathfrak p} \omega[\mathfrak o_-,\infty_+].
\end{aligned}
\end{equation}
By cancelling $\sqrt{\Upsilon_0\Upsilon_1\cdots \Upsilon_{m-1}}$ in \eqref{eq:4.28b} and \eqref{eq:4.28c}, we get
\begin{subequations}\label{eq:4.29}
    \begin{align}
       &\frac{\sqrt{\Upsilon_m}a_{m+1}}{a_m}=\frac{C_{m+1}^{(1)}}{C_m^{(1)}}\frac{\theta(-\mathcal A(\infty_-)+\vec\psi(m+1)+\vec K;B)}
{\theta(-\mathcal A(\infty_-)+\vec\psi(m)+\vec K;B)}
\exp\int_{\mathfrak p^{\prime}_0}^{\infty_-}\omega[\mathfrak p(\beta^2),\infty_+],\label{eq:4.29a}\\
&\frac{b_{m+1}}{\sqrt{\Upsilon_m}b_m}
=-\frac{C_{m+1}^{(2)}}{C_m^{(2)}}\frac{\theta(-\mathcal A(\infty_+)+\vec\phi(m+1)+\vec K;B)}
{\theta(-\mathcal A(\infty_+)+\vec\phi(m)+\vec K;B)}\gamma_+^{\beta^2}.\label{eq:4.29b}
   \end{align}
\end{subequations}
Now we consider the first row in equation \eqref{eq:4.9b}, i.e.
\begin{equation}\label{eq:4.30}
   \mathfrak h^{(1)}(m+1,\mathfrak p)=\frac{\lambda^2}{\sqrt{\Upsilon_m}}
   (a_m\mathfrak h^{(2)}(m,\mathfrak p) -\mathfrak h^{(1)}(m,\mathfrak p))+\sqrt{\Upsilon_m}
   \mathfrak h^{(1)}(m,\mathfrak p),
\end{equation}
which reads
\begin{equation}\label{eq:4.31}
   \mathfrak h^{(1)}(m+1,\mathfrak 0_+)=\sqrt{\Upsilon_m}\mathfrak h^{(1)}(m,\mathfrak 0_+)
\end{equation}
at the point $\mathfrak 0_+=(\zeta=0,\,\xi=\sqrt{R(0)})$.
Then, substituting \eqref{eq:4.21a} into \eqref{eq:4.31} we have
\begin{equation}\label{eq:4.32}
   \frac{C_{m+1}^{(1)}}{C_m^{(1)}}=
\sqrt{\Upsilon_m}\frac{\theta[\vec\psi(m)+\vec K +\vec\eta_{\mathfrak o+};B]}
{\theta[\vec\psi(m+1)+\vec K+\vec\eta_{\mathfrak o_+};B]}
\exp\int_{\mathfrak o_+}^{\mathfrak p^{\prime}_0}\omega[\mathfrak p(\beta^2),\infty_+],
\end{equation}
where $\vec{\eta}_{\mathfrak o_+}=-\mathcal A(\mathfrak o_+)$.
On the other hand, from the second row in equation \eqref{eq:4.9b} we obtain
\begin{equation}\label{eq:4.33}
   \frac{C^{(2)}_{m+1}}{C^{(2)}_{m}} =\frac{\beta^2}{\sqrt{\Upsilon_m}}
   \frac{\theta(\vec\phi(m)+\vec K+\vec{\eta}_{\mathfrak0_-};B)}
 {\theta(\vec\phi(m+1)+\vec K+\vec{\eta}_{\mathfrak0_-};B)}
 \exp\int_{\mathfrak0_-}^{\mathfrak p^{\prime}_0}\omega[\mathfrak p(\beta^2),\infty_+],
 \end{equation}
with $\vec\eta_{\mathfrak0_-}=-\mathcal A(\mathfrak0_-)$  since $\mathfrak h^{(1)}(m,\mathfrak o_-)=0$.
Next, substituting equations \eqref{eq:4.32} and \eqref{eq:4.33} into \eqref{eq:4.29} we arrive at
\begin{subequations}\label{eq:4.34}
     \begin{align}
        &a_m=a_0\frac{\theta(\vec\psi(m)+\vec K+\vec{\eta}_{\infty_-};B)
        \theta(\vec\psi(0)+\vec K+\vec{\eta}_{\mathfrak{o_+}};B)}
{\theta(\vec\psi(0)+\vec K+\vec{\eta}_{\infty_-};B)
\theta(\vec\psi(m)+\vec K+\vec{\eta}_{\mathfrak{o}_+};B)}
\exp\int_{\mathfrak o_+}^{\infty_-}m\omega[\mathfrak p(\beta^2),\infty_+],\label{eq:4.34a}\\
 &b_m=b_0(-\beta^2)^m\frac{\theta(\vec\phi(m)+\vec K+\vec{\eta}_{\infty_+};B)
 \theta(\vec\phi(0)+\vec K+\vec{\eta}_{\mathfrak o_-};B)}
{\theta(\vec\phi(0)+\vec K+\vec{\eta}_{\infty_+};B)
\theta(\vec\phi(m)+\vec K+\vec{\eta}_{\mathfrak o_-};B)}(\gamma_+^{\beta^2})^m
\exp\int_{\mathfrak o_-}^{\mathfrak p^{\prime}_0}m\omega[\mathfrak p(\beta^2),\infty_+],\label{eq:4.34b}
    \end{align}
\end{subequations}
where $\vec\eta_{\infty_+}=-\mathcal A(\infty_+), \vec\eta_{\infty_-}=-\mathcal A(\infty_-)$.
The expression of $\Upsilon_m$ is calculated by using \eqref{eq:4.28a} and \eqref{eq:4.32}, i.e.
\begin{equation}\label{eq:4.35}
    \Upsilon_m=-\frac{\theta(\vec\psi(m)+\vec K+\vec{\eta}_{\infty_+};B)
    \theta(\vec\psi(m+1)+\vec K+\vec{\eta}_{\mathfrak o_+};B)}
{\theta(\vec\psi(m+1)+\vec K+\vec{\eta}_{\infty_+};B)
\theta(\vec\psi(m)+\vec K+\vec{\eta}_{\mathfrak{o_+}};B)}
\cdot \frac{\exp\int_{\mathfrak p^{\prime}_0}^{\mathfrak{o_+}}
\omega[\mathfrak p(\beta^2),\infty_+]}{\gamma_+^{\beta^2}}.
\end{equation}

Thus, we have reconstructed the finite genus potentials $(a_m,b_m)$  using the Riemann theta functions,
which will lead to the algebro-geometric solutions for the discrete CLL system \eqref{eq:1.1}.

\section{Algebro-geometric solutions to the discrete CLL equation}\label{sec-4}

\subsection{Compatibility of the symplectic maps and the discrete CLL system}\label{sec-4-1}

In Section \ref{sec-2} we have constructed an integrable symplctic map $S_{\beta}$ in (\ref{eq:3.23})
by which we defined a discrete phase flow in \eqref{3.1}.
Now taking $(\beta,m)=(\beta_1,m_1)$ and $(\beta,m)=(\beta_2,m_2)$ respectively,  we can have two
integrable symplctic maps $S_{\beta_{1}}$ and $S_{\beta_{2}}$.
They  share  same integrals $\{F_{0},\cdots,F_{N-1}\}$ and the spectral curve $\mathcal {R}$.
According to the discrete analogue of Liouville-Arnold theorem \cite{Arnold,Goldstein,Jordan,Veselov,Veselov1,Bruschi,Suris},
the corresponding discrete phase flows $S^{m_{1}}_{\beta_{1}}$ and $S_{\beta_{2}}^{m_{2}}$ commute.

In the following, let us show that the discrete CLL system \eqref{eq:1.1} is a consequence
of the compatibility of $(\beta,m)=(\beta_1,m_1)$ and $(\beta,m)=(\beta_2,m_2)$.
Equivalently, we consider two linear problems:
\begin{subequations}\label{eq:2.6}
\begin{align}
&\t{\chi}=D^{(\beta_1)}\chi=\frac{1}{\sqrt{\Upsilon^{(1)}}}\left(\begin{array}{cc}
        -\lambda^2+\Upsilon^{(1)}&\lambda a \\
        \lambda \t{b}&\beta^2_1
        \end{array}\right)\chi,\ \ \Upsilon^{(1)}=\beta_{1}^2+a\t{b},\label{eq:2.6a} \\
&\overline{\chi}=D^{(\beta_2)}\chi=\frac{1}{\sqrt{\Upsilon^{(2)}}}\left(\begin{array}{cc}
        -\lambda^2+\Upsilon^{(2)}&\lambda a \\
        \lambda \overline{b}&\beta^2_2
        \end{array}\right)\chi,\ \ \Upsilon^{(2)}=\beta_{2}^2+a\overline{b}.\label{eq:2.6b}
\end{align}
\end{subequations}
Note that $D^{(\beta_i)}$ is defined as in \eqref{eq:4.5}.
Their compatibility gives rise to
\begin{align}
&  \t{D}^{(\beta_2)}D^{(\beta_1)}-\overline{D}^{(\beta_1)}D^{(\beta_2)} \nonumber\\
= ~   &\frac{1}{\sqrt{\Upsilon^{(1)} \t{\Upsilon}^{(2)}}}
\left(\begin{array}{cc}
\lambda^2Z_{12}+\Upsilon^{(1)}\t{\Upsilon}^{(2)}-\overline{\Upsilon}^{(1)} \Upsilon^{(2)}& -\lambda\Xi_1 \\
-\lambda \Xi_2 & 0
\end{array}\right) \nonumber \\
    &+\frac{\overline{\Upsilon}^{(1)}\Upsilon^{(2)}-\Upsilon^{(1)}\t{\Upsilon}^{(2)}}
    {(\sqrt{\overline{\Upsilon}^{(1)}
     \Upsilon^{(2)}}+\sqrt{\Upsilon^{(1)}\t{\Upsilon}^{(2)}})
     \sqrt{\Upsilon^{(1)}\t{\Upsilon}^{(2)}\overline{\Upsilon}^{(1)} \Upsilon^{(2)}}} \nonumber\\
    &\times \left(\begin{array}{cc}
        \lambda^4-\lambda^2(\Upsilon^{(2)}+\overline{\Upsilon}^{(1)}-\overline{a} \overline{b})
        +\overline{\Upsilon}^{(1)}
        \Upsilon^{(2)} 
        & -\lambda^3a+\lambda(a\overline{\Upsilon}^{(1)}+\beta_2^2\overline{a})\\
        \lambda^3\overline{\t{b}}+\lambda(\overline{\t{b}}\Upsilon^{(2)}+\beta_1^2\overline{b})
         & \lambda^2a\overline{\t{b}}+\beta_1^2\beta_2^2
    \end{array}\right).\label{eq:1.02}
\end{align}
where $\Xi_1$ and $\Xi_2$ are  given as in \eqref{eq:1.1},
\begin{equation*}
  \begin{aligned}
     Z_{12}&=\overline{\Upsilon}^{(1)}+{\Upsilon}^{(2)}-\t{{\Upsilon}}^{(2)}
     -{\Upsilon}^{(1)}+\t{a}\t{b}-\overline{a}\overline{b}\\
           &=\frac{1}{a}\Xi_1+\frac{1}{\overline{\t{b}}}\Xi_2
           +\frac{1}{a\overline{\t{b}}}(\Upsilon^{(1)}\t{\Upsilon}^{(2)}-\overline{\Upsilon}^{(1)} \Upsilon^{(2)}),
  \end{aligned}
\end{equation*}
and
\begin{equation*}
    \Upsilon^{(1)}\t{\Upsilon}^{(2)}-\overline{\Upsilon}^{(1)} \Upsilon^{(2)}
    =-\overline{a}\Xi_2-\t{b}\Xi_1+\frac{\beta_1^2}{\beta_1^2-\beta_{2}^2}
    [(\overline{a}-\t{a})\Xi_2-(\overline{b}-\t{b})\Xi_1].
\end{equation*}
This immediately yields the discrete CLL system \eqref{eq:1.1}.

\subsection{Algebro-geometric solutions of the discrete CLL system}\label{sec-4-1}

We define $\big(p(m_{1},m_{2}),q(m_{1},m_{2})\big)=S^{m_{1}}_{\beta_{1}}S_{\beta_{2}}^{m_{2}}(p_{0},q_{0})$,
and in the Jacobi variety $J(\mathcal {R})$ we get
\begin{equation}\label{eq:0.01}
\begin{split}
&\vec{\phi}(m_{1},m_{2})= \vec{\phi}(0,0)+m_{1}\vec{\Omega}_{\beta_{1}} +m_{2}\vec{\Omega}_{\beta_{2}},\\
&\vec{\psi}(m_{1},m_{2})= \vec{\phi}(0,0)+m_{1}\vec{\Omega}_{\beta_{1}} +m_{2}\vec{\Omega}_{\beta_{2}}+\vec{\Omega}_{0},\\
&\vec{\Omega}_{0}=\int_{\mathfrak o_-}^{\infty_+}\vec\omega,\quad 
\vec{\Omega}_{\beta_{k}}=\int_{\mathfrak p(\beta_{k}^2)}^{\infty_+}\vec\omega,\ \ k=1,2.
\end{split}
\end{equation}
Note that when $\lambda=\alpha_j$, the components $(p_j(m_{1},m_{2}),q_j(m_{1},m_{2}))$
satisfy the equation (\ref{eq:3.23}) for $\beta=\beta_1,\beta_2$ simultaneously,
which means
the potentials $a(m_1,m_2),b(m_{1},m_{2})$ solve the discrete CLL system.
Then we immediately have the following.

\begin{theorem}\label{T-4}
The discrete CLL system (\ref{eq:1.1}) admit the following algebro-geometric solutions:
\begin{subequations}\label{eq:4.37ab}
\begin{align}
a(m_1,m_2)=&a(0,0)\frac{\theta(\sum_{k=1}^2m_k\vec\Omega_{\beta_k}
+\vec{\Omega}_{0}+\vec\phi(0,0)
+\vec K+\vec\eta_{\infty_-};B)\cdot
\theta(\vec\phi(0,0) +\vec{\Omega}_{0}+\vec K+\vec\eta_{\mathfrak o_+};B)}
{\theta(\sum_{k=1}^2m_k\vec\Omega_{\beta_k}+\vec{\Omega}_{0}+\vec\phi(0,0)
+\vec K+\vec\eta_{\mathfrak o_+};B)\cdot \theta(\vec\phi(0,0)
+\vec{\Omega}_{0}+\vec K+\vec\eta_{\infty_-};B)} \nonumber \\
&\times \exp\sum_{k=1}^2 m_k\int_{\mathfrak o_+}^{\infty_-}\omega[\mathfrak p(\beta_k^2),\infty_+],
\label{eq:4.37}
\\
b(m_{1},m_{2})=&b(0,0)\frac{\theta(\sum_{k=1}^2m_k\vec\Omega_{\beta_k}+\vec\phi(0,0)
+\vec K+\vec{\eta}_{\infty_+};B)\cdot\theta(\vec\phi(0,0)+\vec K+\vec{\eta}_{\mathfrak o_-};B)}
{\theta(\sum_{k=1}^2m_k\vec\Omega_{\beta_k}+\vec\phi(0,0)+\vec K
+\vec{\eta}_{\mathfrak o_-};B)\cdot\theta(\vec\phi(0,0)+\vec K+\vec{\eta}_{\infty_+};B)} \nonumber \\
 &\times (-\beta_{1}^2\gamma_+^{\beta_{1}^2})^{m_{1}} (-\beta_{2}^2
 \gamma_+^{\beta_{2}^2})^{m_{2}}
 \exp \sum _{k=1}^{2}m_{k}\int_{\mathfrak o_-}^{\mathfrak p^{\prime}_0}
 \omega[\mathfrak p(\beta_{k}^2),\infty_+],\label{eq:4.38}
\end{align}
\end{subequations}
where the dipole differential $\omega[\mathfrak p(\beta_k^2),\infty_+]$ is defined as
 \begin{equation}\label{eq:4.40}
\omega[\mathfrak p(\beta_k^2),\infty_+]=
\left(\zeta^{g}+\frac{\xi+\sqrt{R(\beta_{k}^{2})}}{\zeta-\beta_{k}^{2}}\right)
\frac{\mathrm{d}\zeta}{2\sqrt{R(\zeta)}},
\end{equation}
with $\xi$ and $R(\zeta)$ given by (\ref{eq:3.10}). Moreover,
 \begin{equation}\label{eq:4.41}
\gamma_+^{\beta_{k}^2}=\lim_{\mathfrak p\to \infty_+}\frac{1}{\zeta}
\exp\int_{\mathfrak p^{\prime}_0}^{\mathfrak p} \omega[\mathfrak p(\beta_{k}^2),\infty_+].
\end{equation}
\end{theorem}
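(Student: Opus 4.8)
The plan is to assemble Theorem~\ref{T-4} from three ingredients already in place: the single-flow reconstruction \eqref{eq:4.34} of Section~\ref{sec-3}, the simultaneous straightening \eqref{eq:0.01} of the two commuting flows, and the compatibility computation \eqref{eq:1.02} of Section~\ref{sec-4}. The first point to record is that the potential $a$ is fixed by \eqref{eq:3.26b} as a function of the phase point $(p,q)$ alone, with no reference to $\beta$; hence, after setting $\big(p(m_1,m_2),q(m_1,m_2)\big)=S^{m_1}_{\beta_1}S^{m_2}_{\beta_2}(p_0,q_0)$, the value $a(m_1,m_2)$ is unambiguous, and the shifts $\t{a}$, $\b{a}$ are realized by one further application of $S_{\beta_1}$, $S_{\beta_2}$, respectively. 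Since $S_{\beta_1}$ and $S_{\beta_2}$ carry the same curve $\mathcal R$ and the same involutive invariants $F_0,\dots,F_{N-1}$ (Proposition~\ref{P-3-1}), the discrete Liouville--Arnold theorem yields both their commutativity and the additive, hence path-independent, evolution \eqref{eq:0.01} of $\vec\psi,\vec\phi$ on $J(\mathcal R)$.

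Given this, I would obtain \eqref{eq:4.37}--\eqref{eq:4.38} by substituting \eqref{eq:0.01} into the one-variable formulas \eqref{eq:4.34a}--\eqref{eq:4.34b}: the single shift $m\vec\Omega_\beta$ in each theta argument is promoted to $m_1\vec\Omega_{\beta_1}+m_2\vec\Omega_{\beta_2}$ (with $\vec\psi$ still carrying $\vec\Omega_0$), the scalar dipole exponent $m\,\omega[\mathfrak p(\beta^2),\infty_+]$ becomes $\sum_{k=1}^2 m_k\,\omega[\mathfrak p(\beta_k^2),\infty_+]$, and the prefactors $(\gamma_+^{\beta^2})^m$, $(-\beta^2)^m$ split into products over $k=1,2$. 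Path-independence in the $(m_1,m_2)$-lattice, inherited from \eqref{eq:0.01}, guarantees that this single closed form agrees with whatever one would get by iterating the recursions \eqref{eq:4.32}--\eqref{eq:4.33} in either order. The explicit dipole differential \eqref{eq:4.40} and the constant \eqref{eq:4.41} are then pinned down as the unique normalized third-kind differential on the hyperelliptic curve \eqref{eq:3.10} with simple poles of residues $\pm1$ at $\mathfrak p(\beta_k^2)$ and $\infty_+$, which is exactly the meromorphic-differential factor appearing in the Baker--Akhiezer representation \eqref{eq:4.21}.

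It remains to confirm that these $(a,b)$ solve \eqref{eq:1.1}. In the $m_1$-direction the components $(p_j,q_j)$ evolve by \eqref{eq:3.23b} with $\beta=\beta_1$, and in the $m_2$-direction with $\beta=\beta_2$; hence the vector $\chi$ assembled from them satisfies both linear problems \eqref{eq:2.6a} and \eqref{eq:2.6b} at once. Because $S_{\beta_1}$ and $S_{\beta_2}$ commute on phase space, the two shifted orderings of the Darboux matrices agree, $\t{D}^{(\beta_2)}D^{(\beta_1)}=\b{D}^{(\beta_1)}D^{(\beta_2)}$, so the left-hand side of \eqref{eq:1.02} vanishes identically in $\lambda$. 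Reading off its two off-diagonal entries then yields $\Xi_1=0$ and $\Xi_2=0$, i.e. \eqref{eq:1.1a}--\eqref{eq:1.1b}.

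The delicate point is to make this last implication airtight, since the right-hand side of \eqref{eq:1.02} carries more than the off-diagonal $\Xi_k$: there is a diagonal $Z_{12}$ term and a full second matrix weighted by $\b{\Upsilon}^{(1)}\Upsilon^{(2)}-\Upsilon^{(1)}\t{\Upsilon}^{(2)}$. Here the two identities displayed just after \eqref{eq:1.02}, which recast both $Z_{12}$ and $\Upsilon^{(1)}\t{\Upsilon}^{(2)}-\b{\Upsilon}^{(1)}\Upsilon^{(2)}$ as combinations of $\Xi_1$ and $\Xi_2$, do the work: they show that the diagonal contribution and the entire second matrix are slaved to $\Xi_1,\Xi_2$, so that the vanishing of \eqref{eq:1.02} is genuinely equivalent to $\Xi_1=\Xi_2=0$ and no spurious constraint is introduced. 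Establishing these reductions is in turn a direct consequence of the constraint \eqref{eq:3.26} and the single-parameter intertwining \eqref{eq:3.28}, so the main labor in this step is careful bookkeeping rather than a new idea.
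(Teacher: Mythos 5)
Your proposal is correct and takes essentially the same route as the paper: the paper proves Theorem~\ref{T-4} by exactly this assembly --- the compatibility computation \eqref{eq:1.02} together with the two identities that reduce $Z_{12}$ and the $\Upsilon$-difference to combinations of $\Xi_1,\Xi_2$, the commutativity of $S_{\beta_1}^{m_1}$ and $S_{\beta_2}^{m_2}$ from the discrete Liouville--Arnold theorem, and substitution of the two-variable straightening \eqref{eq:0.01} into the single-flow reconstruction \eqref{eq:4.34}. Your final paragraph merely makes explicit the bookkeeping that the paper compresses into ``this immediately yields the discrete CLL system.''
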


In what follows, we  illustrate the quasi-periodic behavior of the solutions by the case of $g=1$.
The ingredients  of this case are the following:
\begin{itemize}
\item
{The elliptic curve reads
\begin{equation}\label{eq:4.42}
\xi^2=R(\zeta)=(\zeta-\zeta_{1})(\zeta-\zeta_{2})(\zeta-\zeta_{3})(\zeta-\zeta_{4}),
\end{equation}
where $\zeta_{1}<\zeta_{2}<\zeta_{3}<\zeta_{4}$.}
\item
{The normalized holomorphic differential is
\begin{equation}\label{eq:4.43}
\omega_1=\displaystyle \frac{C_{11}\mathrm{d}\zeta}{2\sqrt{(\zeta-\zeta_{1})(\zeta-\zeta_{2})(\zeta-\zeta_{3})(\zeta-\zeta_{4})}},
\end{equation}
where the normalization constant $C_{11}$ is
\begin{equation}\label{eq:4.44}
C_{11}^{-1}=\int_{a_1}\displaystyle \frac{\mathrm{d}\zeta}{2\sqrt{(\zeta-\zeta_{1})(\zeta-\zeta_{2})(\zeta-\zeta_{3})(\zeta-\zeta_{4})}}.
\end{equation}
}
\item
{Two periods are
\begin{equation}\label{eq:4.45}
1=\int_{a_1}\omega_1, \ \ B_{11}=\int_{b_1}\omega_1.
\end{equation}
}
\item
{The Jacobi theta function $\vartheta_{3}$ is
\begin{equation}\label{eq:C11}
\vartheta_{3}(z\mid B_{11})=\theta(z; B_{11})=\sum_{n=-\infty}^{+\infty}
\exp[\pi\sqrt{-1}(n^{2}B_{11}+2nz)],\ \ z\in \mathbb{C}.
\end{equation}
}
\end{itemize}

Thus, when the genus is $g=1$, by introducing
\begin{subequations}\label{eq:4.48}
\begin{align}
&\Omega_{\beta_k}=\int_{\mathfrak p(\beta_{k}^2)}^{\infty_+}\omega_1,\ \ 
\Omega_{0}=\int_{\mathfrak o_-}^{\infty_+}\omega_1,\ \  
K_1=\displaystyle\frac{1+B_{11}}{2}, \label{eq:4.48a}\\
& \eta_{\infty_\pm}=-\int_{\mathfrak{p}_{0}}^{\infty_\pm}\omega_1,\ \ 
\eta_{\mathfrak o_\pm}=-\int_{\mathfrak{p}_{0}}^{\mathfrak o_\pm}\omega_1,\label{eq:4.48b}\\
&\omega[\mathfrak p(\beta_k^2),\infty_+]
=
\left(\zeta+\frac{\xi+\sqrt{R(\beta_{k}^{2})}}{\zeta-\beta_{k}^{2}}\right)
\frac{\mathrm{d}\zeta}{2\sqrt{R(\zeta)}}, \label{eq:4.48c}\\
&\gamma_+^{\beta_{k}^2}=\lim_{\mathfrak p\to \infty_+}\frac{1}{\zeta}
\exp\int_{\mathfrak p^{\prime}_0}^{\mathfrak p} \omega[\mathfrak p(\beta_{k}^2),\infty_+]\label{eq:4.48d},
\end{align}
\end{subequations}
the algebro-geometric solutions (\ref{eq:4.37}) and (\ref{eq:4.38}) can be written as
\begin{subequations}\label{4.13}
\begin{align}
a(m_1,m_2)=&a(0,0)\mathfrak{a}(m_1,m_2)
\exp\sum_{k=1}^2 m_k\int_{\mathfrak o_+}^{\infty_-}\omega[\mathfrak p(\beta_k^2),\infty_+],\\
b(m_{1},m_{2})=&b(0,0)\mathfrak{b}(m_1,m_2)
\left(-\beta_{1}^2\gamma_+^{\beta_{1}^2}\right)^{m_{1}}
\left(-\beta_{2}^2\gamma_+^{\beta_{2}^2}\right)^{m_{2}}
\exp \sum _{k=1}^{2}m_{k}\int_{\mathfrak o_-}^{\mathfrak p^{\prime}_0}
\omega[\mathfrak p(\beta_{k}^2),\infty_+],
\end{align}
\end{subequations}
where
\begin{subequations}\label{4.14}
\begin{align}
\mathfrak{a}(m_1,m_2)&=\frac{\theta(\sum_{k=1}^2m_k\Omega_{\beta_k}+\Omega_{0}+\phi(0,0)
+ K_{1}+\eta_{\infty_-};B_{11})\cdot
\theta(\phi(0,0) +\Omega_{0}+ K_{1}+\eta_{\mathfrak o_+};B_{11})}
{\theta(\sum_{k=1}^2m_k\Omega_{\beta_k}+\Omega_{0}+\phi(0,0) +K_{1}+
\eta_{\mathfrak o_+};B_{11})
\cdot \theta(\phi(0,0)+\Omega_{0}+ K_{1}+\eta_{\infty_-};B_{11})},\label{eq:CLLa}\\
\mathfrak{b}(m_1,m_2)&=\frac{\theta(\sum_{k=1}^2m_k\Omega_{\beta_k}+\phi(0,0)+K_{1}
+\eta_{\infty_+};B_{11})\cdot\theta(\phi(0,0)+K_{1}+\eta_{\mathfrak o_-};B_{11})}
{\theta(\sum_{k=1}^2m_k\Omega_{\beta_k}+\phi(0,0)+ K_{1}+\eta_{\mathfrak o_-};B_{11})
\cdot\theta(\phi(0,0)+ K_{1}+\eta_{\infty_+};B_{11})}\label{eq:CLLb}.
\end{align}
\end{subequations}
To illustrate the solutions, we set
\begin{equation}\label{zeta-j}
\zeta_1=1,~\zeta_2= 3,~\zeta_3=5,~\zeta_4=8,~ \phi(0,0)+ K_{1}=0,~ \beta_1^2=10,~\beta_2^2=13. 
\end{equation}
$\mathfrak{p}(\beta_k^2)=\left(\beta_k^2, \sqrt{R(\beta_k^2)}\right)$  for $k=1,2$
denote the  points on the elliptic curve defined by \eqref{eq:4.42}.
The quasi-periodic evolution of the solutions are revealed by the components
$\mathfrak{a}(m_1,m_2)$ and $\mathfrak{b}(m_1,m_2)$.
The integrals \eqref{eq:4.44}, \eqref{eq:4.45} and \eqref{eq:4.48} involved in these components are numerically evaluated using Mathematica, yielding the results
\begin{align}\label{parameters}
&C_{11}=1.30467\,\mathrm{i}, ~~   B_{11}=1.21091\,\mathrm{i}, ~~
\Omega_{0}=-0.485654\,\mathrm{i}, ~~ \Omega_{\beta_1}=0.123456\,\mathrm{i}, ~~ \Omega_{\beta_2}=0.0769913\,\mathrm{i}, \nonumber\\
&\eta_{\mathfrak o_+}=-0.084846\,\mathrm{i}, ~~ \eta_{\mathfrak o_-}=-0.391547\,\mathrm{i}, ~~ \eta_{\infty_+}=0.0940414\,\mathrm{i},
~~ \eta_{\infty_-}=-0.57037\,\mathrm{i},
\end{align}
where we take $\mathfrak{p}_0=(-3, \sqrt{R(-3}))$.
We depicted such $\mathfrak{a}(m_1,m_2)$ and $\mathfrak{b}(m_1,m_2)$ in Figure  \ref{Fig-1} .
\begin{figure}[!h]
\centering
\subfigure[]{
\begin{minipage}{5.5cm}
\includegraphics[width=\textwidth]{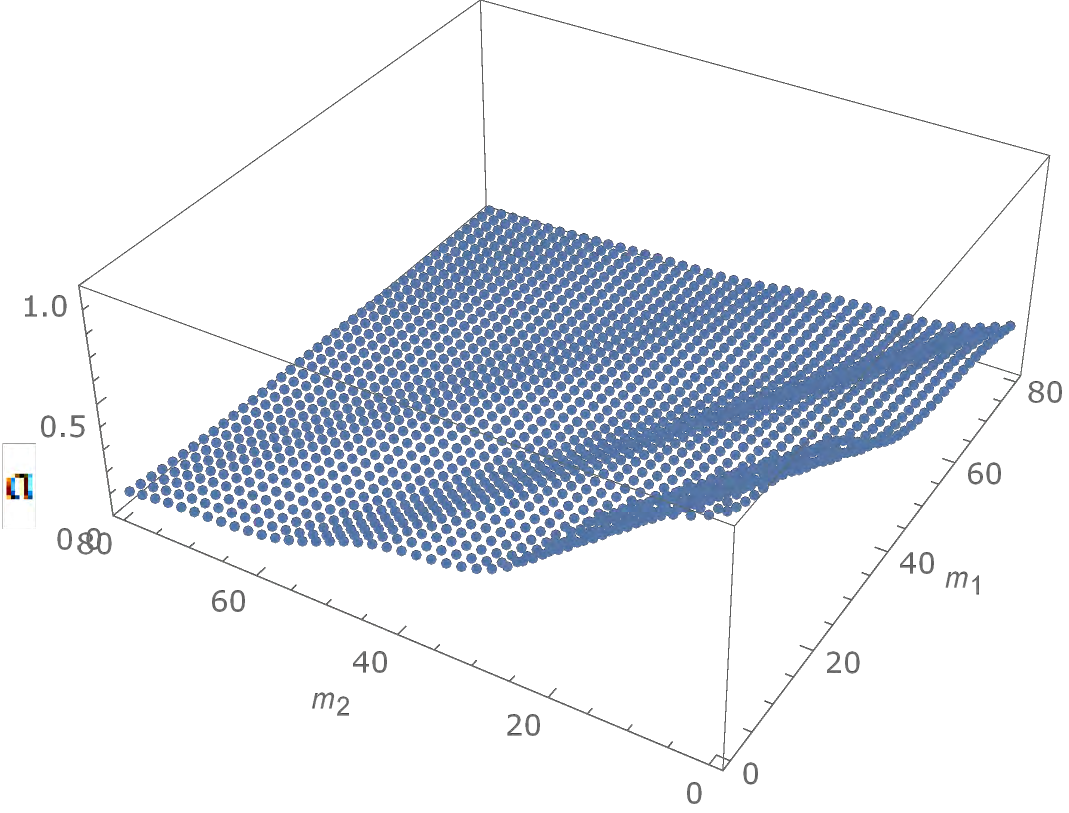}
\end{minipage}
}
\hspace{0.2in}
\subfigure[]{
\begin{minipage}{5.5cm}
\includegraphics[width=\textwidth]{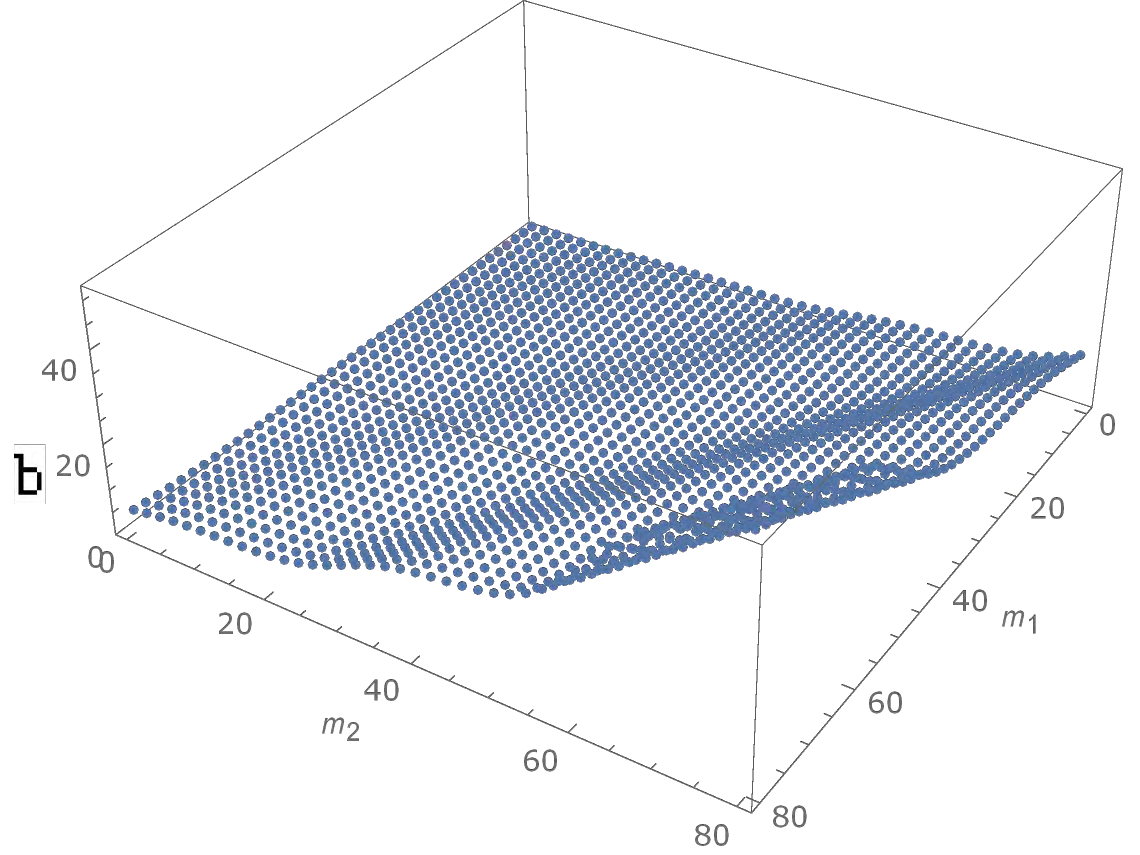}
\end{minipage}
}
\caption{(a) Shape and motion of $\mathfrak{a}(m_1,m_2)$ given in \eqref{eq:CLLa}. 
(b) Shape and motion of $\mathfrak{b}(m_1,m_2)$ given in \eqref{eq:CLLb}.  }
\label{Fig-1}
\end{figure}

\section{Conclusions}\label{sec-5}

In this paper we have constructed algebro-geometric solutions for the discrete CLL system \eqref{eq:1.1}.
These solutions are presented as in \eqref{eq:4.37} together with \eqref{eq:4.38} in terms of the  Riemann theta functions.
Recalling the whole procedure,
the first step is to find an integrable symplectic map,
which is usually associated to the spectral problem of the investigated equation.
In this paper, the spectral problem \eqref{Lax-1} is not suitable for constructing such a map.
In stead, what we employed is \eqref{eq:3.23b}, which is associated with \eqref{Lax-1}
via certain gauge transformations (see \cite{LZZ}) and deformations.
To prove the map \eqref{eq:3.23} is integrable and symplectic,
we introduced the Lax matrix $L(\lambda)$ in \eqref{eq:3.2}
and made use of $r$-matrix and the constraints \eqref{eq:3.25}.
Note that these constraints make the map no longer linear
but they do bridge the gap between the eigenfunctions and potentials.
In the second step, by means of the map we introduced evolutions of the  eigenfunctions (the discrete phase flows).
Then we introduced Baker-Akhiezer functions and figured out their devisors.
After introducing Abel-Jacobi variables, we were able to express the Baker-Akhiezer functions
using the  Riemann theta functions,
which finally lead to a reconstruction of the discrete potentials $(a(m), b(m))$  in terms of the  Riemann theta functions,
which are presented in \eqref{eq:4.34}.
In the third step, we introduced evolution of the discrete phase flow in another direction
(associated with another parameter).
The compatibility of the maps $S^{m_{1}}_{\beta_{1}}$ and $S_{\beta_{2}}^{m_{2}}$
gives rise to the discrete CLL system \eqref{eq:1.1} as a necessary condition.
As a result, we could present $(a(m_1,m_2), b(m_1.m_2))$ in terms of the  Riemann theta functions,
as given in \eqref{eq:4.37} together with \eqref{eq:4.38},
which are the algebro-geometric solutions for the discrete CLL system \eqref{eq:1.1}.

In the final part of the paper, we raise several related problems for further consideration.
One is about high genus solutions.
In this paper we only illustrated the $g=1$ case. The high genus solutions are still waiting for understanding.
In addition, in this paper the evolution of the phase flow is introduced by iterating the
integrable symplectic map $S_{\beta}$, which means in the solutions the independent variables
$m_1$ and $m_2$ are nonnegative integers.
Whether  $(m_1,m_2)$ can be extended to $\mathbb{Z}\times \mathbb{Z}$,
one needs to check whether the inverse of the  map $S_{\beta}$ is still integrable and symplectic.
This is actually a common problem appearing in solving discrete equations through the
nonlinearizing Lax approach.
The third problem is about reduction of solutions.
Note that the discrete CLL system \eqref{eq:1.1} fails
to get a discrete analogue of the CLL equation \eqref{CLL} by the reduction of the type $b=a^*$.
However, it can be reduced to the discrete Burgers equation \cite{Z-Burgers-2022}
by taking $a=1$ (or $b=1$) \cite{CZZ-2021}.
Whether one can reduce the algebro-geometric solutions to those of the discrete Burgers equation
will be an interesting problem.
The last one is about the finite-gap integration. So far the procedure developed
by Novikov and Matveev and their collaborators
\cite{Dubrovin-1975,DN-1975,IM-1975a,IM-1975b,Krichever-1977, Novikov-1974}
is not yet discretized.

\vskip 20pt

\subsection*{Acknowledgments}
This project is supported by the NSF of China (Nos. 12271334, 12326428 and 12301309), 
the Key Scientific Research Project of Colleges and Universities in Henan Province (No. 23A110015),
and the Basic Research and Cultivation Foundation of Zhengzhou University (No. JC23153002).

\appendix

\section{Discrete mKP equation and its solutions}\label{A-1}

Here, we investigate the discrete mKP equation \eqref{eq:A-1-1} by using the discrete CLL spectral problem.
In addition to \eqref{eq:2.6}, let us consider
\begin{equation}\label{eq:A-0}
\widehat{\chi}=D^{(\beta_3)}\chi=\frac{1}{\sqrt{\Upsilon^{(3)}}}\left(\begin{array}{cc}
        -\lambda^2+\beta_{3}^2+a\widehat{b}&\lambda a \\
        \lambda \widehat{b}&\beta^2_3
        \end{array}\right)\chi,\ \ \Upsilon^{(3)}=\beta_{3}^2+a\widehat{b}
\end{equation}
where $\beta_3$ is the lattice parameters associated with $m_{3}$-direction, and
the shift in this direction is denoted by a hat, i.e. $\widehat{f}=f(m_{1},m_{2},m_{3}+1)$.
From the compatibility between \eqref{eq:2.6a} and \eqref{eq:2.6b}, we have
\begin{equation}\label{eq:A-0-1}
Z_{12}\equiv\overline{\Upsilon}^{(1)}+{\Upsilon}^{(2)}-\t{{\Upsilon}}^{(2)}
-{\Upsilon}^{(1)}+\t{a}\t{b}-\overline{a}\overline{b}=0.
\end{equation}
Similarly,  using \eqref{eq:2.6a}, \eqref{eq:2.6b} and \eqref{eq:A-0}  we have
\begin{align}
Z_{23} & \equiv \widehat{\Upsilon}^{(2)}+{\Upsilon}^{(3)}-\overline{{\Upsilon}}^{(3)}
-{\Upsilon}^{(2)}+\overline{a}\overline{b}-\widehat{a}\widehat{b}=0, \label{eq:A-0-2} \\
Z_{31} & \equiv  \t{\Upsilon}^{(3)}+{\Upsilon}^{(1)}-\widehat{{\Upsilon}}^{(1)}-{\Upsilon}^{(3)}
+\widehat{a}\widehat{b}-\t{a}\t{b}=0. \label{eq:A-0-3}
\end{align}
We introduce function $W=W(m_1,m_2,m_3)$ by
\begin{equation}\label{eq:A-0-0}
W-\widetilde{W}=\ln \frac{\Upsilon^{(1)}}{\beta^2_1},\ \
W-\overline{W}=\ln \frac{\Upsilon^{(2)}}{\beta^2_2}, \ \
W-\widehat{W}=\ln \frac{\Upsilon^{(3)}}{\beta^2_3}
\end{equation}
and from the identity
\begin{equation*}
Z_{12}+Z_{23}+Z_{31}=0,
\end{equation*}
we can obtain the discrete mKP equation \eqref{eq:A-1-1}.
In other words, $W$ defined by \eqref{eq:A-0-0} satisfies the discrete mKP equation \eqref{eq:A-1-1}.

In addition, from equation \eqref{eq:4.35}, the function $W_m$ that satisfies $W_{m+1}-W_{m}=-\ln{\frac{\Upsilon_m}{\beta^2}} $ can be worked out:
\begin{align}
    W_m=&W_0-\ln{[(-1)^m\beta^{-2m}\frac{\theta(\vec\psi(0)+\vec K+\vec{\eta}_{\infty_+};B)\theta(\vec\psi(m)+\vec K+\vec{\eta}_{\mathfrak{o_+}};B)}
{\theta(\vec\psi(m)+\vec K+\vec{\eta}_{\infty_+};B)\theta(\vec\psi(0)+\vec K+\vec{\eta}_{\mathfrak{o}_+};B)}(\gamma_+^{\beta^2})^{-m}]} \nonumber \\
&+m\int_{\mathfrak o_+}^{\mathfrak{p^{\prime}_0}}\omega[\mathfrak p(\beta^2),\infty_+].
\label{eq:A-1}
\end{align}
As a result, we have the explicit solution for discrete mKP equation: 
\begin{subequations}
\begin{equation}\label{eq:A-2}
W(m_1,m_2,m_3)={W}_{2}(m_1,m_2,m_3)+{W}_{1}(m_1,m_2,m_3),
\end{equation}
where 
\begin{align}
{W}_{2}&=\ln\Big[\frac{\theta(\sum_{k=1}^3m_k\vec\Omega_{\beta_k}
+\vec{\Omega}_{0}+\vec\phi(0,0,0)+\vec K+\vec{\eta}_{\infty_+};B)
\theta(\vec\phi(0,0,0)+\vec{\Omega}_{0}+\vec K+\vec{\eta}_{\mathfrak o_+};B)}
{\theta(\sum_{k=1}^3m_k\vec\Omega_{\beta_k}+\vec{\Omega}_{0}
+\vec\phi(0,0,0)+\vec K+\vec{\eta}_{\mathfrak o_+};B)
\theta(\vec\phi(0,0,0)+\vec{\Omega}_{0}+\vec K+\vec{\eta}_{\infty_+};B)}\Big],\label{eq:A-3}\\
{W}_{1}&=\sum_{k=1}^3 m_k\left(\int_{\mathfrak o_+}^{\mathfrak{p^{\prime}_0}}
\omega[\mathfrak p(\beta_k^2),\infty_+]
+\ln(-\beta_k^2\gamma_+^{\beta_k^2})\right)+W(0,0,0).\label{eq:A-4}
\end{align}
\end{subequations}
This solution has a same structure as the one derived in \cite{XCZ-2022}. 
In the case of genus $g=1$, the quasi-periodic evolution of $\mathfrak{W}_2(m_1,m_2,m_3)$ 
for the parameters in \eqref{parameters} and $\Omega_{\beta_3}=0.068633\,\mathrm{i}$ with $\beta_3^2=14$
is  illustrated in Figure \ref{Fig-2}.
\begin{figure}[!h]
\centering
\begin{minipage}{6cm}
\includegraphics[width=\textwidth]{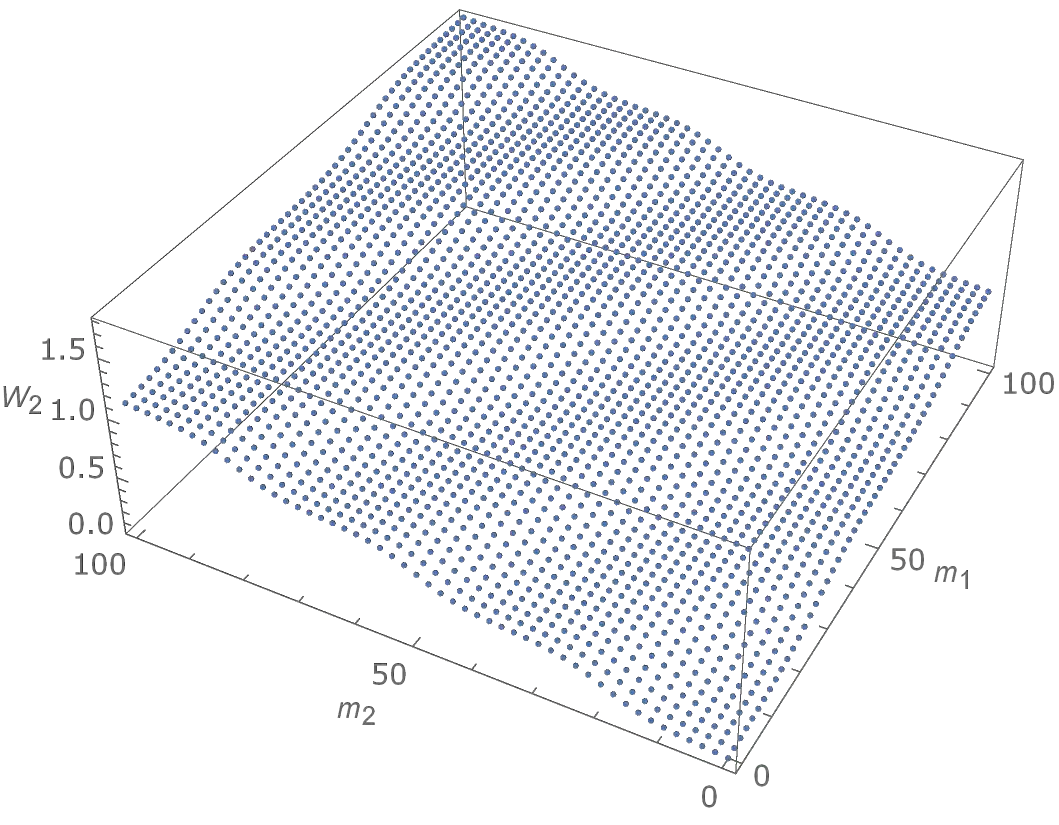}
\centering
\end{minipage}
\caption{Shape and motion of $W_2(m_1,m_2,0)$.   }
\label{Fig-2}
\end{figure}


\end{document}